\setlist[enumerate,1]{label=(\roman*)}
\title{Asymptotic Linearity of Consumption Functions and Computational Efficiency}
\author{Qingyin Ma\thanks{International School of Economics and Management, Capital University of Economics and Business. Email: \href{mailto:qingyin.ma@cueb.edu.cn}{qingyin.ma@cueb.edu.cn}.} \and Alexis Akira Toda\thanks{Department of Economics, University of California San Diego. Email: \href{mailto:atoda@ucsd.edu}{atoda@ucsd.edu}.}}
\numberwithin{equation}{section}
\numberwithin{thm}{section}
\numberwithin{exmp}{section}
\newcommand{\aCRRA}{\operatorname{aCRRA}}
\newcommand{\RV}{\operatorname{RV}}
\newcommand{\aE}{\operatorname{aE}}
\newcommand{\cA}{\mathcal{A}}
\newcommand{\cC}{\mathcal{C}}
\newcommand{\cS}{\mathcal{S}}
\newcommand{\ZZ}{\mathsf{Z}}
\begin{document}

\maketitle

\begin{abstract}
We prove that the consumption functions in optimal savings problems are asymptotically linear if the marginal utility is regularly varying. We also analytically characterize the asymptotic marginal propensities to consume (MPCs) out of wealth. Our results are useful for obtaining good initial guesses when numerically computing consumption functions, and provide a theoretical justification for linearly extrapolating consumption functions outside the grid.

\medskip

{\bf Keywords:} computational efficiency, optimal savings problem, regular variation.

\medskip

{\bf JEL codes:} C63, C65, D15.
\end{abstract}

\section{Introduction}
The optimal savings problem---a dynamic optimization problem in which an agent chooses the optimal level of consumption and savings---is a fundamental building block of modern macroeconomics and contributes to a wide range of research fields ranging from asset pricing, life-cycle choice, fiscal policy, social security, to income and wealth inequality, among others.\footnote{See, for example, \cite{DeatonLaroque1992RES, DeatonLaroque1996JPE},  \cite{CagettiDenardi2006JPE}, \cite{DenardiFrenchJones2010JPE}, \cite{GunerKaygusuzVentura2012}, \cite{GuvenenSmith2014ECMA}, \cite{HeathcoteStoreslettenViolante2014AER}, \cite{BenhabibBisinZhu2015}, and the overview of \cite{Guvenen2011}. } The last several decades have witnessed substantial development in the theory of optimal savings. At the same time, existing studies find supporting evidence that the optimal consumption function---solution to the optimal savings problem---is asymptotically linear in wealth in various specialized settings.

In simple analytically solvable models that feature homothetic preferences and no income risk as in \cite{samuelson1969}, it is well known that the marginal propensity to consume (MPC) out of wealth is independent of the wealth level. In more complicated models, the asymptotic linearity of consumption functions has been numerically observed as in \citet[Figure~IV]{Zeldes1989QJE}, \citet[Figure~1]{huggett1993}, \citet[Figure~2]{krusell-smith1998}, and \citet[Figure~4]{Toda2019JME}. With hyperbolic absolute risk aversion (HARA) preferences and general income shocks, \cite{CarrollKimball1996} show that the consumption functions are concave, which implies that the MPCs converge, although they do not characterize the limit. More recently, \cite{MaToda2021JET} establish the asymptotic linearity of consumption functions and analytically characterize the asymptotic MPCs when the utility function is constant relative risk aversion (CRRA).

In spite of these interesting findings, the asymptotic properties of the optimal consumption function have hitherto received no general investigation. One cost of this status quo is that in various applications, the asymptotic behavior of agents' consumption as asset tends to infinity has substantial impact on  economic activities. For example, when studying wealth inequality, the saving performance of the rich, which is closely related to the asymptotic MPCs, is a driving force of the fat-tailed wealth distribution and its evolution \citep{FagerengHolmMollNatvikWP}. Without a systematic understanding of the asymptotic properties of consumption, researchers will have to provide their own analysis piecemeal in individual applications.

A second cost is concerned with numerical computation. When solving for the optimal consumption function numerically, it is common to evaluate the functions on a finite grid and interpolate or extrapolate off the grid points. Some extrapolation is usually necessary because even if the agent's asset is currently inside the grid, when the return on wealth is sufficiently high, the next period's asset may fall outside the grid with positive probability. Having a theory of optimal consumption at infinity is useful because it tells us how to properly set up the grid points and extrapolate functions outside the grid. 

In this paper, we systematically study the asymptotic behavior of the optimal consumption function in a highly general framework that contains a wide range of important settings as special cases, including the settings of some recent advancements in optimal savings \citep{MaStachurskiToda2020JET, MaToda2021JET}. Our main result is that, under the weak assumption that the marginal utility asymptotically performs like a power function as consumption increases (plus some other regularity assumptions),\footnote{This 
	specification includes commonly used utility functions such as CRRA or HARA as special cases.}
the consumption functions are asymptotically linear, or equivalently, the asymptotic MPCs converge to some constants.\footnote{Throughout the paper 
	we say that a consumption function $c(a)$ (where $a>0$ is financial wealth) is \emph{asymptotically linear} if the asymptotic \emph{average} propensity to consume $\bar{c}=\lim_{a\to\infty}c(a)/a$ exists. This condition is weaker than $\lim_{a\to\infty}\abs{c(a)-\bar{c}a-d}=0$ for some $\bar{c},d\in \R$, which may be a more common definition of asymptotic linearity. If the asymptotic MPC $\bar{c}=\lim_{a\to\infty}c'(a)$ exists, then l'H\^opital's rule implies $\lim_{a\to\infty}c(a)/a=\lim_{a\to\infty}c'(a)=\bar{c}$. Although not necessarily mathematically precise, due to the lack of better language we use ``constant asymptotic average propensity to consume'', ``constant asymptotic MPC'', and ``asymptotic linearity'' interchangeably.} 
Furthermore, we analytically characterize the asymptotic MPCs. 

Different from the existing literature, which typically focuses on special utility functions such as CRRA or HARA in relatively stylized settings, we only require that the marginal utility function asymptotically behaves like a power function, which is mathematically defined as \emph{regular variation}. Our results are significantly more general than the existing literature because regular variation is a parametric assumption only at infinity, and we do not impose any assumptions on the utility function on compact sets beyond the usual monotonicity and concavity.

Furthermore, based on the theory we develop, we systematically study computation methods. We focus on both computation speed and solution accuracy. As to the former, we apply our theory to construct proper initial guesses that facilitate efficient computation. The initial guess we propose relies on the asymptotic MPCs we derive and can be solved conveniently in applications. Numerical experiments show that policy iteration via the initial guess we propose is about $1.3$ to $1.8$ times faster than via the routine initial guess of consuming all current assets. As to the latter, we study in depth how to properly set up the grid points and extrapolate policy functions outside the grid when solving models numerically. This is realized by comparing the distances of MPCs at different asset levels from their theoretical asymptotes, as well as by exploring how truncating the grid space affects solution accuracy (measured by the error of the calculated consumption function relative to the true consumption function). 

The theory we develop provides a theoretical justification for linearly extrapolating policy functions outside the grid
when solving optimal savings problems numerically as in \cite{Gouin-BonenfantTodaParetoExtrapolation}. A closely related contribution is that our theory explains the ``approximate aggregation'' property in heterogeneous-agent general equilibrium models as in \cite{krusell-smith1998}. Approximate aggregation refers to the observation that, when solving heterogeneous-agent general equilibrium models, keeping track of just the first moment of the wealth distribution is nearly sufficient, despite the fact that the entire wealth distribution is a state variable. Because the market clearing condition involves aggregate savings, aggregation would be possible if saving is linear in wealth. Our results show that consumption (hence saving) is approximately linear in wealth, which explains the approximate aggregation property. 

The rest of this paper is structured as follows. The remaining of this section discusses related literature. Section~\ref{sec:main_res} formulates the optimal savings problem and establishes our main theoretical results. Asymptotic properties of the optimal consumption function are studied in general settings. Sufficient conditions for asymptotic linearity of the consumption function are discussed. Section~\ref{sec:comp} inspects the computation method in detail. By applying our theory, we propose useful initial guesses for efficient computation and discuss various details concerning solution accuracy. Section~\ref{sec:conclu} concludes. Main proofs are deferred to the appendices.

\subsection*{Related literature}

The existence of a solution to optimal savings problems has been studied by \cite{SchechtmanEscudero1977}, \cite{ChamberlainWilson2000}, and \cite{LiStachurski2014}. The recent work \cite{MaStachurskiToda2020JET} extend the Euler equation method of \cite{LiStachurski2014} and show the existence and uniqueness of a solution in a general setting with Markovian shocks, capital income risk, stochastic discounting, and potentially unbounded utility functions. \cite{MaToda2021JET} make further extension to \cite{MaStachurskiToda2020JET} by relaxing their assumptions on utility and idiosyncratic risks. Our paper is in the spirit of \cite{MaToda2021JET}.

Because optimal savings problems generally do not admit closed-form solutions, proving properties of the theoretical solution is often challenging. \cite{Rabault2002} studies under what conditions borrowing constraints bind. \cite{BenhabibBisinZhu2015} characterize the tail behavior of the wealth distribution under \iid capital and labor income shocks, which \cite{MaStachurskiToda2020JET} extend to a Markovian setting. \cite{Holm2018} shows that with HARA preferences, tightening the liquidity constraint decreases consumption. \cite{Light2018} shows that when the marginal utility is convex and the Markov chain has a certain monotonicity property, increasing income risk increases precautionary savings. \cite{LehrerLight2018} show that with CRRA utility with risk aversion bounded above by 1, lower interest rate increases consumption. \cite{Light2020} applies this result to prove the uniqueness of equilibrium in a certain Bewley-Aiyagari model. \cite{StachurskiToda2019JET, StachurskiToda2020Corrigendum} show that consumption functions have linear lower bounds when the relative risk aversion is bounded, which they apply to show that wealth inherits the tail behavior of income in general equilibrium models with labor income risk only.

With HARA preferences and general income shocks, \cite{CarrollKimball1996} show the concavity of consumption functions in finite horizon problems, which implies asymptotic linearity. However, under certain regularity assumptions, \cite{TodaHARA} shows that HARA is necessary for the concavity of consumption functions, implying that establishing asymptotic linearity based on concavity is possible only in very special cases. Furthermore, \cite{MaStachurskiToda2020JET} extend the concavity result of \cite{CarrollKimball1996} to infinite horizon and prove asymptotic linearity of the optimal consumption function. \cite{MaToda2021JET} characterize the asymptotic MPCs analytically in the framework of \cite{MaStachurskiToda2020JET} specialized to CRRA utility. As discussed above, out paper separates from these studies in that we impose significantly weaker assumptions on the utility function.

\section{Main results}\label{sec:main_res}
\subsection{Optimal savings problem}\label{sec:IF}
In this section we introduce a general optimal savings problem that closely follows the settings in \cite{MaStachurskiToda2020JET} and \cite{MaToda2021JET}. To avoid redundancies, we limit the discussion to the bare essentials. More details may be found  in \citet[Section 2.1]{MaToda2021JET}.

Time is discrete and denoted by $t=0,1,\dots,n$, with possibly $n=\infty$. Let $a_t$ be the financial wealth of the agent at the beginning of period $t$. The agent chooses consumption $c_t\ge 0$ and saves the remaining wealth $a_t-c_t$. The period utility function is $u$ and the discount factor, gross return on wealth, and non-financial income in period $t$ are denoted by $\beta_t,R_t,Y_t$, where we normalize $\beta_0=1$. Thus the agent solves
\begin{subequations}\label{eq:IF}
\begin{align}
&\maximize && \E_0\sum_{t=0}^n \left(\prod_{i=0}^t\beta_i\right)u(c_t) \notag\\
&\st && a_{t+1}=R_{t+1}(a_t-c_t)+Y_{t+1}, \label{eq:IF_budget}\\
&&& 0\le c_t\le a_t, \label{eq:IF_borrow}
\end{align}
\end{subequations}
where the initial wealth $a_0=a>0$ is given, \eqref{eq:IF_budget} is the budget constraint, and \eqref{eq:IF_borrow} implies that the agent cannot borrow. 
The stochastic processes $\set{\beta_t,R_t,Y_t}_{t\ge 1}$ obey
\begin{equation}
\beta_t=\beta(Z_{t-1},Z_t,\zeta_t),\quad R_t=R(Z_{t-1},Z_t,\zeta_t),\quad Y_t=Y(Z_{t-1},Z_t,\zeta_t),\label{eq:betaRY}
\end{equation}
where $\beta,R,Y$ are nonnegative measurable functions, $\set{Z_t}_{t\ge 0}$ is a time-homogeneous Markov chain taking values in a finite set $\ZZ=\set{1,\dots,Z}$ with a transition probability matrix $P$, and the innovations $\set{\zeta_t}$ are independent and identically distributed (\iid) over time and could be vector-valued.
To simplify the notation, we introduce the following conventions. We use a hat to denote a random variable that is realized next period, for example $Z=Z_t$ and $\hat{Z}=Z_{t+1}$. When no confusion arises, we write $\hat{\beta}$ for $\beta(Z,\hat{Z},\hat{\zeta})$ and define $\hat{R},\hat{Y}$ analogously. Conditional expectations are abbreviated using subscripts, for example
\begin{equation*}
\E_zX=\CE{X|Z=z}\quad \text{and}\quad \E_{z,\hat{z}}X=\CE{X|Z=z,\hat{Z}=\hat{z}}.
\end{equation*}
For $\theta\in \R$, we define the matrix $K(\theta)$ related to the transition probability matrix $P$, discount factor $\beta$, and return $R$ by
\begin{equation}
K_{z\hat{z}}(\theta)\coloneqq P_{z\hat{z}}\E_{z,\hat{z}}\hat{\beta}\hat{R}^\theta=P_{z\hat{z}}\E\beta(z,\hat{z},\hat{\zeta})R(z,\hat{z},\hat{\zeta})^\theta\in [0,\infty]. \label{eq:Ktheta}
\end{equation}
The matrix $K(\theta)$ for various values of $\theta$ appears throughout the paper. For a square matrix $A$, let $r(A)$ denote its spectral radius (largest absolute value of all eigenvalues).

Consider the following assumptions.

\begin{asmp}\label{asmp:Inada}
The utility function $u:[0,\infty)\to \R \cup \set{-\infty}$ is continuously differentiable on $(0,\infty)$, $u'$ is positive and strictly decreasing on $(0,\infty)$, and $u'(\infty)=0$.
\end{asmp}

Assumption~\ref{asmp:Inada} is essentially the usual monotonicity and concavity assumptions together with a form of Inada condition. 

\begin{asmp}\label{asmp:spectral}
Let $K$ be as in \eqref{eq:Ktheta}. The following conditions hold:
\begin{enumerate}
\item\label{item:Ebeta} The matrices $K(0)$ and $K(1)$ are finite,
\item\label{item:rbeta} If $n=\infty$, then $r(K(0))<1$ and $r(K(1))<1$,
\item\label{item:EY} $\E_{z,\hat{z}}\hat{Y}<\infty$, $\E_{z,\hat{z}}u'(\hat{Y})<\infty$, and $\E_{z,\hat{z}}\hat{\beta}\hat{R}u'(\hat{Y})<\infty$ for all $(z,\hat{z})\in \ZZ^2$.
\end{enumerate}
\end{asmp}
Under the maintained assumptions, Theorem~\ref{thm:exist} below states that the optimal savings problem \eqref{eq:IF} admits a unique solution and provides a computational algorithm. To make its statement precise, we introduce further definitions. Let $\cC$ be the space of candidate consumption functions such that $c:(0,\infty)\times \ZZ\to \R$ is continuous, is increasing in the first argument, $0\le c(a,z)\le a$ for all $a>0$ and $z\in \ZZ$, and
\begin{equation}
\sup_{(a,z)\in (0,\infty)\times \ZZ}\abs{u'(c(a,z))-u'(a)}<\infty.\label{eq:uprimediff}
\end{equation}
For $c,d\in \cC$, define the metric
\begin{equation}
\rho(c,d)=\sup_{(a,z)\in (0,\infty)\times \ZZ}\abs{u'(c(a,z))-u'(d(a,z))}.\label{eq:rho}
\end{equation}
When $u'$ is positive, continuous, and strictly decreasing (implied by Assumption~\ref{asmp:Inada}), it is straightforward (\eg, Proposition 4.1 of \cite*{LiStachurski2014}) to show that $(\cC,\rho)$ is a complete metric space.

\cite*{LiStachurski2014}, \cite{MaStachurskiToda2020JET}, and \cite{MaToda2021JET} show that the solution to the optimal savings problem \eqref{eq:IF} can be obtained as the unique fixed point of the policy iteration operator, which updates the consumption function using the Euler equation. The following lemma defines the updating rule. In what follows, long proofs are relegated to Appendix~\ref{sec:proof}.

\begin{lem}[\citealp{MaToda2021JET}, Lemma~1]\label{lem:xi}
Suppose that $u'$ is continuous, positive, strictly decreasing, and $\E_{z,\hat{z}}\hat{\beta}\hat{R}<\infty$ and $\E_{z,\hat{z}}u'(\hat{Y})<\infty$ for all $(z,\hat{z})\in \ZZ^2$. Then for any $c\in \cC$, $a>0$, and $z\in \ZZ$, there exists a unique $\xi\in [0,a]$ satisfying the Euler equation
\begin{equation}
u'(\xi)=\min\set{\max\set{\E_z \hat{\beta}\hat{R}u'(c(\hat{R}(a-\xi)+\hat{Y},\hat{Z})),u'(a)},u'(0)},\label{eq:xi}
\end{equation}
with $\xi>0$ if $u'(0)=\infty$.
\end{lem}

When Assumptions~\ref{asmp:Inada}, \ref{asmp:spectral} hold and $c\in \cC$, $a>0$, and $z\in \ZZ$, by Lemma~\ref{lem:xi} we can define a unique number $Tc(a,z)\coloneqq \xi\in [0,a]$ that solves \eqref{eq:xi}. The following theorem establishes the existence and uniqueness of a solution to the optimal savings problem \eqref{eq:IF}.

\begin{thm}\label{thm:exist}
Suppose Assumptions~\ref{asmp:Inada} and \ref{asmp:spectral} hold. Then $T$ is a monotone self map on $\cC$ and a unique solution $c\in \cC$ to the optimal savings problem \eqref{eq:IF} exists, which is characterized as follows:
\begin{enumerate}
\item\label{item:exist_finite} If $n<\infty$, then $c=T^nc_0$, where $c_0(a,z)\coloneqq a$.
\item\label{item:exist_inf} If $n=\infty$, then $c$ is the unique fixed point of $T$, and we have $T^nc_0\to c$ for any $c_0\in \cC$ (in particular, $c_0(a,z)\coloneqq a$).
\end{enumerate}
\end{thm}

\begin{proof}
The case $n=\infty$ is established in \citet[Theorem~2]{MaToda2021JET}. The case with finite $n$ follows by backward induction.
\end{proof}

\subsection{Asymptotic linearity of consumption functions}\label{sec:AL}

In this section we show that the consumption functions in the optimal savings problem \eqref{eq:IF} are asymptotically linear when the marginal utility is regularly varying. This is also where we depart from the setting in \cite{MaToda2021JET}, who assume outright that the utility function exhibits constant relative risk aversion (CRRA). To state the main results, we introduce several notions. A positive measurable function $\ell:(0,\infty)\to (0,\infty)$ is \emph{slowly varying} if $\ell(\lambda x)/\ell(x)\to 1$ as $x\to \infty$ for all $\lambda>0$. A positive measurable function $f:(0,\infty)\to (0,\infty)$ is \emph{regularly varying} with index $\rho\in \R$ if $f(\lambda x)/f(x)\to \lambda^\rho$ as $x\to \infty$ for all $\lambda>0$. \citet[Theorem 1.4.1]{BinghamGoldieTeugels1987} show that if $f$ is a positive measurable function such that $g(\lambda)=\lim_{x\to\infty} f(\lambda x)/f(x)\in (0,\infty)$ exists for $\lambda>0$ in a set of positive measure, then the limit function must be of the form $g(\lambda)=\lambda^\rho$ for some $\rho\in \R$ and $f$ is regularly varying with index $\rho$.


\begin{asmp}\label{asmp:regular}
The marginal utility function is regularly varying with index $-\gamma<0$; equivalently, there exists a slowly varying function $\ell$ such that $u'(c)=c^{-\gamma}\ell(c)$.
\end{asmp}

The assumption that marginal utility is regularly varying is related to several assumptions in the literature. Following \cite{BrockGale1969} and \cite{SchechtmanEscudero1977}, we say that $u'$ has an \emph{asymptotic exponent} $-\gamma$ if $\log u'(c)/\log c\to -\gamma$ as $c\to\infty$. We say that $u$ is \emph{asymptotically CRRA} with coefficient $\gamma$ if $u$ is twice differentiable and $-cu''(c)/u'(c)\to \gamma$ as $c\to\infty$. The following proposition clarifies the relation between these concepts.

\begin{prop}\label{prop:RV}
Let $\aCRRA(\gamma)$, $\RV(-\gamma)$, and $\aE(-\gamma)$ be respectively the class of utility functions $u$ such that $u$ is asymptotically CRRA with coefficient $\gamma$, $u'$ is regularly varying with index $-\gamma$, and $u'$ has an asymptotic exponent $-\gamma$. Then
\begin{equation*}
\aCRRA(\gamma)\subsetneq \RV(-\gamma)\subsetneq \aE(-\gamma).
\end{equation*}
\end{prop}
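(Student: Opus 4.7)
The plan is to verify each inclusion using the differential identity
\[
\frac{c\ell'(c)}{\ell(c)} = \gamma + \frac{cu''(c)}{u'(c)},\qquad \ell(c):= c^\gamma u'(c),
\]
and then to construct explicit counterexamples for strictness. Under Assumption \ref{asmp:Inada}, $\ell$ is strictly positive and $C^1$. For $u \in \aCRRA(\gamma)$, the identity yields $\epsilon(c) := c\ell'(c)/\ell(c) \to 0$, so for any $\lambda > 0$,
\[
\left|\log\frac{\ell(\lambda c)}{\ell(c)}\right| = \left|\int_c^{\lambda c}\frac{\epsilon(t)}{t}\,dt\right| \le (\log\lambda)\sup_{t\ge c}|\epsilon(t)| \to 0,
\]
hence $\ell(\lambda c)/\ell(c) \to 1$ and $u \in \RV(-\gamma)$. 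For $u \in \RV(-\gamma)$, writing $u'(c) = c^{-\gamma}\ell(c)$, the Potter bound from \cite{BinghamGoldieTeugels1987} gives $c^{-\delta} \le \ell(c) \le c^\delta$ eventually for every $\delta > 0$, so $\log u'(c)/\log c = -\gamma + \log\ell(c)/\log c \to -\gamma$ and $u \in \aE(-\gamma)$.

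For strict inclusion $\aE(-\gamma) \setminus \RV(-\gamma)$ I propose $u'(c) := c^{-\gamma}(2+\sin\log c)$ with $\gamma>1$, which satisfies Assumption \ref{asmp:Inada} directly (boundedness of the oscillating factor in $[1,3]$ keeps $u'$ positive and strictly decreasing, and $u'(\infty)=0$). Boundedness gives $\log u'(c)/\log c \to -\gamma$, while $u'(\lambda c)/u'(c) = \lambda^{-\gamma}(2+\sin(\log c+\log\lambda))/(2+\sin\log c)$ fails to converge at $\lambda = e^\pi$: the trailing ratio equals $(2-\sin\log c)/(2+\sin\log c)$, which takes the values $1/3$ and $3$ along subsequences where $\sin\log c=\pm 1$. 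For $\RV(-\gamma)\setminus\aCRRA(\gamma)$, I propose $\ell(c) := \exp\!\left(\int_0^{\log c}\sin(s^2)\,ds\right)$ and $u'(c) := c^{-\gamma}\ell(c)$ with $\gamma>1$. The oscillatory-integral estimate $\int_T^{T+h}\sin(s^2)\,ds = O(1/T)$, obtained by integration by parts with amplitude $1/(2s)$, gives $\ell(\lambda c)/\ell(c) = \exp(O(1/\log c)) \to 1$, so $\ell$ is slowly varying; yet $c\ell'(c)/\ell(c) = \sin((\log c)^2)$ oscillates in $[-1,1]$, whence $-cu''(c)/u'(c) = \gamma - \sin((\log c)^2)$ does not converge to $\gamma$. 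Strict concavity of $u$ in this example follows from $\gamma > 1$ because $-cu''/u' \ge \gamma - 1 > 0$.

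The main technical obstacle is verifying slow variation of the Fresnel-type example, which reduces to the bound $\int_T^{T+h}\sin(s^2)\,ds = O(1/T)$ uniform in $h$ on compacta; once that is in hand, all remaining steps are either the elementary calculus identity linking $u''/u'$ to $\ell'/\ell$ or the standard Potter bound from \cite{BinghamGoldieTeugels1987}, with the Inada-type verifications on the two counterexamples being immediate consequences of $\gamma > 1$.
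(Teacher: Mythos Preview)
Your argument is essentially correct and close in spirit to the paper's. The two inclusions are handled by equivalent tools: you use the log-derivative identity $c\ell'/\ell=\gamma+cu''/u'$ and the Potter-type bound $c^{-\delta}\le\ell(c)\le c^{\delta}$, while the paper invokes the Representation Theorem for slowly varying functions directly. Your counterexamples are structurally the same as the paper's---oscillatory perturbations of $c^{-\gamma}$---only with different specific oscillations (you take $2+\sin\log c$ and the Fresnel-type $\exp\bigl(\int_0^{\log c}\sin(s^2)\,ds\bigr)$; the paper takes $\exp(\delta\sin\log c)$ and $\exp\bigl(\delta\int_0^c\tfrac{\sin t}{t}\,dt\bigr)$). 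The Fresnel estimate $\int_T^{T+h}\sin(s^2)\,ds=O(1/T)$ is a nice touch and does exactly what you need.

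There is, however, one genuine gap. The proposition is asserted for every $\gamma>0$, but both of your counterexamples satisfy Assumption~\ref{asmp:Inada} (specifically $u''<0$) only for $\gamma$ above a threshold. For the Fresnel example you compute $-cu''(c)/u'(c)=\gamma-\sin((\log c)^2)$, which is not strictly positive once $\gamma\le 1$. For $u'(c)=c^{-\gamma}(2+\sin\log c)$ the condition $u''<0$ reads $\gamma(2+\sin\log c)>\cos\log c$; minimizing the left side minus the right gives $2\gamma-\sqrt{1+\gamma^2}$, which is nonpositive for $\gamma\le 1/\sqrt{3}$. The paper handles this by inserting a free parameter $\delta\in(0,\gamma)$ in front of the oscillatory term, so that $-cu''/u'=\gamma-\delta\sin(\cdot)\ge\gamma-\delta>0$. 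The same device repairs both of your constructions for arbitrary $\gamma>0$: replace $2+\sin\log c$ by $2+\delta\sin\log c$ and $\int_0^{\log c}\sin(s^2)\,ds$ by $\delta\int_0^{\log c}\sin(s^2)\,ds$ with $\delta\in(0,\gamma)$ small.
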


It is clear from Proposition \ref{prop:RV} that Assumption \ref{asmp:regular} is significantly weaker than CRRA (assumed in \citealp{MaToda2021JET}) because it imposes a parametric assumption only at infinity ($c\to\infty$). Furthermore, the parameter $\gamma>0$ can be interpreted as the asymptotic relative risk aversion of the agent.

To prove our main results, we introduce a technical condition that permits us to apply the dominated convergence theorem.

\begin{asmp}\label{asmp:R}
There exists $\delta>0$ such that $R(z,\hat{z},\zeta)\in \set{0}\cup [\delta,\infty)$ almost surely conditional on all $(z,\hat{z})\in \ZZ^2$.
\end{asmp}

Assumption~\ref{asmp:R} holds, for example, if the \iid innovation $\zeta$ takes finitely many values, which is almost always the case in applied numerical works that employ discretization. Note that we allow the possibility $R=0$ with positive probability. Throughout the rest of the paper, we introduce the following conventions to simplify the notation: ``1'' denotes either the real number 1 or the vector $(1,\dots,1)'\in \R^Z$ depending on the context; we interpret $0\cdot \infty=0$ and $\beta R^{1-\gamma}=(\beta R)R^{-\gamma}$, so $\beta R^{1-\gamma}=0$ whenever $\beta=0$ or $R=0$ regardless of the value of $\gamma>0$. Although the following property is an immediate implication of the above assumptions and convention, we state it as a lemma since we frequently refer to it.

\begin{lem}\label{lem:EbetaR}
Suppose Assumptions \ref{asmp:spectral}\ref{item:Ebeta} and \ref{asmp:R} hold. Then $\E_{z,\hat{z}}\beta R^{1-\gamma}<\infty$ for all $(z,\hat{z})\in \ZZ^2$. Consequently, the matrix $K(1-\gamma)$ defined in \eqref{eq:Ktheta} is finite.
\end{lem}

\begin{proof}
If $R=0$, then $\beta R^{1-\gamma}=(\beta R)R^{-\gamma}=0$ by convention. If $R>0$, then $R\ge \delta$ almost surely by Assumption \ref{asmp:R}. In either case $\beta R^{1-\gamma}\le \beta R\delta^{-\gamma}$, so
$$\E_{z,\hat{z}}\beta R^{1-\gamma}\le \E_{z,\hat{z}}\beta R\delta^{-\gamma}=\delta^{-\gamma}\E_{z,\hat{z}}\beta R<\infty$$
by Assumption \ref{asmp:spectral}\ref{item:Ebeta}.
\end{proof}

Under the maintained assumptions, we can show that the consumption functions are asymptotically linear, which is our main result. We present two results, one for finite horizon ($n<\infty$) and another for infinite horizon ($n=\infty$).

\begin{thm}[Asymptotic linearity, $n<\infty$]\label{thm:AL_finite}
Suppose Assumptions \ref{asmp:Inada}--\ref{asmp:R} hold. Define the map $F:\R_+^Z\to \R_+^Z$ and sequence $\set{x_n}_{n=0}^\infty\subset \R_+^Z$ by
\begin{equation}
(Fx)(z)\coloneqq \left(1+(K(1-\gamma)x)(z)^{1/\gamma}\right)^\gamma, \quad z=1,\dots,Z, \label{eq:F}
\end{equation}
$x_0=1$, and $x_n=Fx_{n-1}$ for all $n\in \N$. Let $c_n(a,z)$ be the $n$-period consumption function established in Theorem~\ref{thm:exist}\ref{item:exist_finite}. Then
\begin{equation}
\lim_{a\to\infty}\frac{c_n(a,z)}{a}=x_n(z)^{-1/\gamma}.\label{eq:cnbar}
\end{equation}
\end{thm}

\begin{thm}[Asymptotic linearity, $n=\infty$]\label{thm:AL_inf}
Let everything be as in Theorem~\ref{thm:AL_finite} and $c(a,z)$ be the consumption function established in Theorem \ref{thm:exist}\ref{item:exist_inf}. Then the sequence $\set{x_n}_{n=0}^\infty\subset \R_+^Z$ monotonically converges to some $x^*\in (0,\infty]^Z$ and
\begin{equation}
0\le \liminf_{a\to\infty}\frac{c(a,z)}{a}\le \limsup_{a\to\infty}\frac{c(a,z)}{a}\le x^*(z)^{-1/\gamma}.\label{eq:MPC_bound}
\end{equation}
Furthermore, the following statements are true.
\begin{enumerate}
\item\label{item:rless1} If $r(K(1-\gamma))<1$, then $x^*$ is the unique fixed point of $F$ in \eqref{eq:F}. If in addition $\liminf_{a\to\infty}c(a,z)/a>0$ for all $z\in \ZZ$, then
\begin{equation}
\bar{c}(z)\coloneqq \lim_{a\to\infty}\frac{c(a,z)}{a}=x^*(z)^{-1/\gamma}\in (0,1].\label{eq:MPC_lim}
\end{equation}
\item\label{item:rgtr1} If $r(K(1-\gamma))\ge 1$, then $F$ in \eqref{eq:F} has no fixed point and $x^*(z)=\infty$ for some $z$. If in addition $K(1-\gamma)$ is irreducible, then for all $z\in \ZZ$ we have
\begin{equation*}
\bar{c}(z)\coloneqq \lim_{a\to\infty}\frac{c(a,z)}{a}=0.
\end{equation*}
\end{enumerate}
\end{thm}

A few remarks are in order. First, when the marginal utility is regularly varying, under the stated technical conditions, Theorem~\ref{thm:AL_finite} shows that consumption functions in finite horizon problems are always asymptotically linear, and the asymptotic MPCs are exactly characterized as in \eqref{eq:cnbar}.

Second, one might conjecture that similar statements hold in infinite horizon problems by taking the limit of both sides of \eqref{eq:cnbar}, but this is not generally true. The reason is that we cannot interchange the two limits $a\to\infty$ and $n\to\infty$. To see this, consider the function $f_n:[0,\infty)\to \R$ defined by $f_n(a)=\max\set{a-n,0}$. Then clearly $\lim_{n\to\infty} f_n(a)=0\eqqcolon f(a)$ pointwise and $\lim_{a\to\infty}f_n(a)/a=1$, so
\begin{equation*}
\lim_{n\to\infty}\lim_{a\to\infty} \frac{f_n(a)}{a}=1\neq 0=\lim_{a\to\infty}\frac{f(a)}{a}=\lim_{a\to\infty}\lim_{n\to\infty} \frac{f_n(a)}{a}.
\end{equation*}
This observation explains why in general we can only obtain bounds of the form \eqref{eq:MPC_bound}.

Third, an immediate implication of Theorem \ref{thm:AL_inf} is that, when $r(K(1-\gamma))<1$, either $\liminf_{a\to\infty}c(a,z)/a=0$ for some $z\in \ZZ$, or \eqref{eq:MPC_lim} holds for all $z\in \ZZ$. However, the theorem does not tell which case occurs, which we investigate in the next section.

\subsection{Sufficient conditions}\label{sec:suff_cond}

We now seek sufficient conditions for the limit \eqref{eq:MPC_lim} to hold. Throughout this section, we maintain Assumptions~\ref{asmp:Inada}--\ref{asmp:R} (with $n=\infty$) and assume $r(K(1-\gamma))<1$. We first present a general result that relies on high-level assumptions, followed by applications to specific cases.

\begin{thm}\label{thm:suff_cond}
Let $b=\inf Y\ge 0$. If there exist numbers $\set{\epsilon(z)}_{z\in \ZZ} \subset (0,1)$ and $A\ge 0$ such that
\begin{equation}
\delta \left(1 - \max_{z\in \ZZ} \epsilon(z) \right)A+b \ge A\label{eq:Acond}
\end{equation}
and
\begin{equation}
u'(\epsilon(z)a)\ge \E_z\hat{\beta}\hat{R}u'\left(\epsilon(\hat{Z})\hat{R}[1-\epsilon(z)]a \right)
\label{eq:suff_cond1}
\end{equation}
for all $a>A$ and $z\in \ZZ$, then $c(a,z)\ge \epsilon(z)a$ for all $a>A$ and $z\in \ZZ$. In particular, the limit \eqref{eq:MPC_lim} holds by Theorem~\ref{thm:AL_inf}\ref{item:rless1}.
\end{thm}

To obtain the limit \eqref{eq:MPC_lim}, it thus suffices to verify the conditions \eqref{eq:Acond} and \eqref{eq:suff_cond1}. To this end, we rewrite \eqref{eq:suff_cond1} in a more convenient form. In what follows, assume $u$ is twice continuously differentiable and let
\begin{equation*}
\gamma(c)\coloneqq -\frac{cu''(c)}{u'(c)}\ge 0
\end{equation*}
be the local relative risk aversion coefficient.

Take any numbers $\set{\epsilon(z)}_{z\in \ZZ}\subset (0,1)$ and define $x=\epsilon(z)a$ and $y=\epsilon(\hat{Z})\hat{R}[1-\epsilon(z)]a$. By the mean value theorem for integrals, we obtain
\begin{align*}
\log \frac{u'(y)}{u'(x)}&=\int_x^y(\log u'(c))'\diff c=\int_x^y\frac{u''(c)}{u'(c)}\diff c\\
&=-\int_x^y\frac{\gamma(c)}{c}\diff c=-\int_x^y\frac{\hat{\gamma}}{c}\diff c=-\hat{\gamma}\log \frac{y}{x},
\end{align*}
where $\hat{\gamma}=\gamma(\hat{c})$ for $\hat{c}=(1-\theta)x+\theta y$ with some $\theta\in (0,1)$. Taking the exponential of both sides, we obtain $u'(y)/u'(x)=(y/x)^{-\hat{\gamma}}$. Therefore
\begin{equation}
\frac{\E_z \hat{\beta}\hat{R}u'(\epsilon(\hat{Z})\hat{R}[1-\epsilon(z)]a)}{u'(\epsilon(z)a)}=\E_z\hat{\beta}\hat{R}\left(\frac{\epsilon(\hat{Z})}{\epsilon(z)}\hat{R}[1-\epsilon(z)]\right)^{-\hat{\gamma}}.\label{eq:eulerRatio}
\end{equation}
Thus, verifying \eqref{eq:suff_cond1} reduces to checking that the right-hand side of \eqref{eq:eulerRatio} is at most 1. Note that the right-hand side depends on $u$ and $a$ only through $\hat{\gamma}$.

We now present several sufficiency results.

\begin{prop}[Constant relative risk aversion]\label{prop:CRRA}
If $u$ exhibits constant relative risk aversion (CRRA), so $u'(c)=c^{-\gamma}$, then the limit \eqref{eq:MPC_lim} holds.
\end{prop}

\begin{proof}
This is Theorem 3 of \cite{MaToda2021JET}, which holds under weaker assumptions (Assumption~\ref{asmp:R} can be dropped). It is also immediate from Theorem~\ref{thm:suff_cond} by setting $A=0$ (which implies \eqref{eq:Acond}), $\epsilon(z)=x^*(z)^{-1/\gamma}$, and noting that the right-hand side of \eqref{eq:eulerRatio} equals 1 because $\hat{\gamma}=\gamma$ and $x^*$ is the fixed point of $F$ in \eqref{eq:F} (see \eqref{eq:xstarEq}).
\end{proof}

\begin{prop}[Bounded relative risk aversion]\label{prop:BRRA}
If $u$ exhibits bounded relative risk aversion (BRRA), so
\begin{equation*}
0\le \ubar{\gamma}\coloneqq \inf_{c>0}\gamma(c)\le \sup_{c>0}\gamma(c)\eqqcolon \bar{\gamma}<\infty,
\end{equation*}
and
\begin{equation}
\max_{z\in \ZZ}\E_z\hat{\beta}\hat{R}\max\set{\hat{R}^{-\ubar{\gamma}},\hat{R}^{-\bar{\gamma}}}<1,\label{eq:suff_cond2}
\end{equation}
then the limit \eqref{eq:MPC_lim} holds.
\end{prop}

\begin{proof}
Take $A=0$, which implies \eqref{eq:Acond}. We aim to show \eqref{eq:suff_cond1} for $\epsilon(z)\equiv \epsilon$ (constant) with sufficiently small $\epsilon>0$. Since by assumption $\hat{\gamma}\in [\ubar{\gamma},\bar{\gamma}]$, it follows from \eqref{eq:eulerRatio} that
\begin{align*}
\frac{\E_z\hat{\beta}\hat{R}u'(\epsilon \hat{R}(1-\epsilon)a)}{u'(\epsilon a)}&\le \E_z\hat{\beta}\hat{R}\max\set{[\hat{R}(1-\epsilon)]^{-\ubar{\gamma}},[\hat{R}(1-\epsilon)]^{-\bar{\gamma}}}\\
&\to \E_z\hat{\beta}\hat{R}\max\set{\hat{R}^{-\ubar{\gamma}},\hat{R}^{-\bar{\gamma}}}<1
\end{align*}
as $\epsilon\downarrow 0$ by \eqref{eq:suff_cond2}. Therefore \eqref{eq:suff_cond1} holds for $\epsilon(z)\equiv \epsilon$ (constant) with sufficiently small $\epsilon>0$.
\end{proof}

In many applied works, it is often the case that the discount factor $\beta$ is constant and agents invest only in a risk-free asset. In this case we obtain the following corollary.

\begin{cor}[Constant $\beta,R$]\label{cor:BRRA}
If $u$ is BRRA, $\beta,R$ are constant, and $R\ge 1$, then the limit \eqref{eq:MPC_lim} holds. Furthermore,
\begin{equation}
\bar{c}(z)=1-(\beta R^{1-\gamma})^{1/\gamma}.\label{eq:cbar_cons}
\end{equation}
\end{cor}

\begin{proof}
Since $R\ge 1$, we have $R^{-\bar{\gamma}}\le R^{-\ubar{\gamma}}\le 1$. Therefore
\begin{equation*}
\max_{z\in \ZZ}\E_z\hat{\beta}\hat{R}\max\set{\hat{R}^{-\ubar{\gamma}},\hat{R}^{-\bar{\gamma}}}\le \beta R=r(K(1))<1
\end{equation*}
by Assumption~\ref{asmp:spectral}\ref{item:rbeta}, so \eqref{eq:suff_cond2} holds. The expression for $\bar{c}(z)$ follows from Example 2 of \cite{MaToda2021JET}.
\end{proof}

Under the assumptions of Corollary~\ref{cor:BRRA}, Proposition 5 of \cite{StachurskiToda2019JET} (which has been corrected as Proposition 5' of \cite{StachurskiToda2020Corrigendum}) establishes that consumption functions have linear lower bounds. Corollary~\ref{cor:BRRA} strengthens their result because it proves the asymptotic linearity with an exact characterization of the asymptotic MPC.

\begin{prop}[Asymptotic CRRA]\label{prop:aCRRA}
If $u$ exhibits asymptotically constant relative risk aversion (aCRRA), so $\gamma(c)\to \gamma$ as $c\to\infty$, and $b=\inf Y\ge 0$ is large enough, then the limit \eqref{eq:MPC_lim} holds. If in addition $\delta>1$ in Assumption~\ref{asmp:R}, then the conclusion holds for any $b\ge 0$.
\end{prop}

Although Proposition~\ref{prop:aCRRA} does not provide an explicit threshold for the minimum income $b=\inf Y$ so that the limit \eqref{eq:MPC_lim} holds, it is clear from its proof that the threshold can be calculated if the utility function $u$ is explicitly given. In particular, how small $b$ can be depends on how fast the local relative risk aversion $\gamma(c)$ converges to $\gamma$.

\section{Computational efficiency}\label{sec:comp}


In this section, we discuss the computational aspects of the optimal savings problem based on the theory we derive. In principle, given Theorem~\ref{thm:exist}\ref{item:exist_inf}, one can compute the solution $c\in \cC$ to the optimal savings problem \eqref{eq:IF} by starting from any $c_0\in \cC$ and iterating the policy iteration operator $T$. However, in practice there are many fine details that need to be addressed. We divide our discussion into initializing $c$ and updating $c$.

\subsection{Initializing $c(a,z)$}

Let $c\in \cC$ be the solution to the optimal savings problem \eqref{eq:IF} and $T:\cC\to \cC$ be the policy iteration operator defined in Section~\ref{sec:IF}. Theorem 2.2 of \cite{MaStachurskiToda2020JET}, which also holds in the more general settings in \cite{MaToda2021JET} and Section~\ref{sec:IF}, shows that $T^k$ is a contraction for some $k\in \N$. Consequently, by the contraction mapping theorem, letting $\rho$ be the marginal utility metric in \eqref{eq:rho}, there exists a number $r\in (0,1)$ such that the approximation error can be bounded as
\begin{equation*}
\rho(T^{kn} c_0,c)\le r^n \rho(c_0,c)\to 0~\text{as}~n\to\infty
\end{equation*}
for any initial guess $c_0\in \cC$. Therefore to compute $c$ efficiently, it is important to start with a good initial guess $c_0\in \cC$.

If Assumptions~\ref{asmp:Inada}--\ref{asmp:R} hold and $r(K(1-\gamma))<1$, we know from Theorem~\ref{thm:AL_inf}\ref{item:rless1} that the $c(a,z)/a$ is asymptotically bounded above by $x^*(z)^{-1/\gamma}$, where $x$ is the unique fixed point of $F$ in \eqref{eq:F}. Therefore it is reasonable to use an affine function with slope $\bar{c}(z)$ as an initial guess of the consumption function $c(a,z)$. To determine its intercept, we do as follows. Let $\bar{a}(z)$ be the asset threshold for the borrowing constraint to bind (so $c(a,z)=a$ for $a\le \bar{a}(z)$ and $c(a,z)<a$ for $a>\bar{a}(z)$). Then at $a=\bar{a}(z)$, the Euler equation \eqref{eq:xi} implies that
\begin{equation*}
u'(\bar{a}(z))=\E_z \hat{\beta}\hat{R}u'(c(\hat{R}(\bar{a}(z)-\bar{a}(z))+\hat{Y},\hat{Z}))=\E_z \hat{\beta}\hat{R}u'(c(\hat{Y},\hat{Z})).
\end{equation*}
Using the approximation $c(a,z)\approx a$ for small asset level, we obtain
\begin{equation}
u'(\bar{a}(z))\approx \E_z \hat{\beta}\hat{R}u'(\hat{Y})\iff \bar{a}(z)\approx (u')^{-1}\left(\E_z \hat{\beta}\hat{R}u'(\hat{Y})\right). \label{eq:abar}
\end{equation}
Therefore a reasonable initial guess based on theory is
\begin{align}
c_0(a,z)&\coloneqq \min\set{a,\bar{c}(z)(a-\bar{a}(z)) + \bar{a}(z)}\notag \\
&=\min\set{a,\bar{c}(z)a + (1-\bar{c}(z))\bar{a}(z)}, \label{eq:c0}
\end{align}
where $\bar{c}(z)=x^*(z)^{-1/\gamma}$ and $\bar{a}(z)$ is defined by the right-hand side of \eqref{eq:abar}. (In \eqref{eq:c0}, we take the minimum with $a$ to satisfy $c_0(a,z)\le a$.) This $c_0$ trivially belongs to the candidate space $\cC$. Furthermore, it satisfies
\begin{equation*}
\lim_{a\to\infty}\frac{c_0(a,z)}{a}=x^*(z)^{-1/\gamma}=\bar{c}(z)=\lim_{a\to\infty}\frac{c(a,z)}{a},
\end{equation*}
so we can expect that $c_0$ approximates $c$ well.

One remaining practical issue is how to numerically compute $x^*(z)$, which appears in \eqref{eq:c0}. By Theorem~\ref{thm:AL_inf}\ref{item:rless1}, $x^*$ is the unique positive solution to the equation $x=Fx$, which in principle can be computed using a nonlinear equation solver. However, doing so is not practical because $x^*$ tends to be a very large vector and nonlinear equation solvers tend to terminate before convergence is achieved. For instance, suppose $\beta,R$ are constant and $1\le R<1/\beta$. Using \eqref{eq:cbar_cons}, we obtain
\begin{equation*}
x^*(z)=\bar{c}(z)^{-\gamma}=(1-(\beta R^{1-\gamma})^{1/\gamma})^{-\gamma}.
\end{equation*}
Suppose we fix the unit of time somehow (\eg, year), one period has time length $\Delta$, and (with a slight abuse of notation) the discount rate is $\delta>0$ and the continuously compounded risk-free rate is $r\in (0,\delta)$. Then $\beta=\e^{-\delta\Delta}$ and $R=\e^{r\Delta}$, so
\begin{equation}
x^*(z)=\left(1-\e^{-\frac{\Delta}{\gamma}(\delta-(1-\gamma)r)}\right)^{-\gamma}. \label{eq:xzDelta}
\end{equation}
It is clear that \eqref{eq:xzDelta} diverges to $\infty$ as $\Delta\downarrow 0$. In fact, \eqref{eq:xzDelta} tends to be quite large in common settings. For instance, let the unit of time be a year and $\delta=0.04$ (4\% annual discounting), $r=0.03$ (3\% risk-free rate), and $\gamma=3$, which are typical values. Figure~\ref{fig:xz} plots the fixed point \eqref{eq:xzDelta} in the range $\Delta\in [1/12,1]$, which implies that a period is between a month and a year. We can see that $x^*(z)$ tends to be quite large, which makes it impractical to numerically solve for $x^*(z)$.

\begin{figure}[!htb]
\centering
\includegraphics[width=0.6\linewidth]{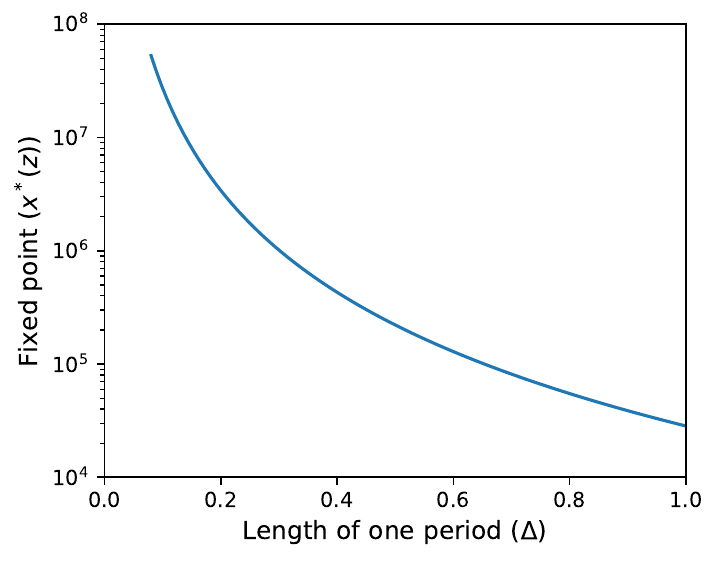}
\caption{Fixed point $x^*(z)$ in \eqref{eq:xzDelta} with $(\gamma,\delta,r)=(3,0.04,0.03)$.}\label{fig:xz}
\end{figure}

A straightforward way to avoid this issue is to directly solve for $\bar{c}(z)=x^*(z)^{-1/\gamma}\in (0,1)$ instead of $x^*(z)$. Noting that $x^*$ is the fixed point of $F$ in \eqref{eq:F}, $\bar{c}(z)$ satisfies
\begin{equation}
\bar{c}(z)=\left(1+\left(\sum_{\hat{z}=1}^Z K_{z\hat{z}}(1-\gamma)\bar{c}(\hat{z})^{-\gamma}\right)^{1/\gamma}\right)^{-1},\quad z=1,\dots,Z.\label{eq:cbareq}
\end{equation}
To numerically solve the system of equations \eqref{eq:cbareq} using a nonlinear equation solver, we may use a good initial guess as follows. Conjecture that $x^*$ is approximately equal to $k1$ for some $k>0$ and that the right Perron vector of $K(1-\gamma)$ is close to 1. Then the equation $x=Fx$ becomes
\begin{equation*}
k1\approx \left(1+(K(1-\gamma)k1)^{1/\gamma}\right)^\gamma\approx \left(1+(r (K(1-\gamma))k)^{1/\gamma}\right)^\gamma 1.
\end{equation*}
Solving for $k$, we obtain
\begin{equation}
\bar{c}(z)\approx k^{-1/\gamma}=1-r(K(1-\gamma))^{1/\gamma}.\label{eq:cbarapprox}
\end{equation}
Since $r(K(1-\gamma))<1$ by assumption, the right-hand side of \eqref{eq:cbarapprox} is always in $(0,1)$. Therefore we can numerically solve the system of nonlinear equations \eqref{eq:cbareq} using the initial guess \eqref{eq:cbarapprox}.

\subsection{Updating $c(a,z)$}

Given a candidate consumption function $c\in \cC$, it is natural to update it to $Tc$ using the Euler equation \eqref{eq:xi}. However, this is not generally feasible because $c$ is a function and thus $\cC$ is infinite-dimensional. In practice, it is common to set up a finite grid $\cA_G\coloneqq \set{a_g}_{g=1}^G$, where $0<a_1<\dots<a_G$ are grid points, update $c$ by solving for $\xi = c(a_g,z)$ using \eqref{eq:xi} for each $(a_g,z) \in \cA_G\times \ZZ$, and interpolate (linear, spline, etc.) it if necessary. 


Although this updating procedure is theoretically justified, it is not computationally efficient because it requires performing root-finding $GZ$ times for each iteration, which is computationally intensive. A straightforward way to improve the algorithm is to avoid root-finding as in the endogenous grid point method of \cite{Carroll2006}. 

Instead of fixing a grid for asset, the endogenous grid method fixes a grid for saving $\cS_G \coloneqq \set{s_g}_{g=1}^G$, where 
$0=s_1<\dots<s_G$, update $c$ on $\cS_G \times \ZZ$, and then choose the grid points for asset endogenously based on the optimal consumption and saving. For concreteness, suppose we use linear interpolation and extrapolation. For each $n \in \N$, let $\cA_G^n \coloneqq \set{a_g^n(z)}_{g=1}^G{_{z=1}^Z}$ be the endogenous grid points for asset determined in the $n$-th iteration, where $a_g^n(z)$ represents the asset grid point when saving is $s_g$ and exogenous state is $z$. We define $\cA_G^0 = \set{a_g^0 (z)}_{g=1}^G{_{z=1}^Z}$ by $a_g^0(z) = s_g$ for all $z \in \ZZ$ and $g = 1, \dots, G$. Furthermore, let $\cC(\cA_G^n)$ be the set of continuous piecewise linear functions $c : (0, \infty) \times \ZZ \to \R$ such that for each $z \in \ZZ$, 
\begin{enumerate*}
\item $0<c(a,z)\le a$ for all $a>0$,
\item $c(a,z)=a$ for $0<a\le a_1^n(z)$,
\item $c(a,z)$ is affine in $a$ on each subinterval $[a_g^n(z),a_{g+1}^n(z)]$ for $g=1,\dots, G-1$, and
\item $c(a,z)$ is linearly extrapolated for $a>a_G^n(z)$.
\end{enumerate*} 
Policy iteration for computing the consumption functions via the endogenous grid method can be summarized as follows.

\begin{framed}
\begin{oneshot}[Policy iteration via the endogenous grid method]\label{alg:politer}
\quad 
\begin{enumerate}[1]
\item\label{item:initial} (Initialization)
	\begin{enumerate}[(i)]
	\item Solve the system of nonlinear equations \eqref{eq:cbareq} using the initial guess \eqref{eq:cbarapprox}.
	
	\item Define $c_0\in \cC(\cA_G^0)$ by \eqref{eq:c0}.
	\end{enumerate}

\item\label{item:update} (Updating) For each $n\in \N$, given $c_{n-1}\in \cC(\cA_G^{n-1})$, update it as follows:
	\begin{enumerate}[(i)]
	\item\label{item:update1} For each $(s_g,z)\in \cS_G\times \ZZ$, compute $\hat{a}\coloneqq \hat{R} s_g + \hat{Y}$ and define
	\begin{equation*}
	\tilde c_n(s_g, z)\coloneqq
	(u')^{-1}\left(\min\set{\E_z \hat{\beta}\hat{R}u'(c_{n-1}(\hat{a},\hat{Z})),u'(0)}\right),
	\end{equation*}
	where $c_{n-1}(\hat{a},\hat{Z})$ is computed by linearly interpolating and extrapolating $c_{n-1}\in \cC(\cA^{n-1}_G)$.
	
	\item\label{item:update2} Define the updated asset grid $\cA_G^{n} = \set{a_g^n(z)}_{g=1}^G{_{z=1}^Z}$ and optimal consumption on $\cA_G^{n}$ by
	\begin{equation*}
	    a_g^n(z) \coloneqq s_g + \tilde c_n(s_g, z)
	    \quad \text{and} \quad
	    c_n (a_g^n(z),z) \coloneqq \tilde c_n(s_g, z).
	\end{equation*}
	
	\item\label{item:update3} 
	Define $c_n\in \cC(\cA_G^n)$ by linearly interpolating and extrapolating using $\set{ c_n (a_g^n(z),z)}_{g=1}^G{_{z=1}^Z}$ and setting $c_n(a,z)=a$ for $a\le a_1^n(z)$.
	\end{enumerate}
\item\label{item:converge} (Convergence) Repeat Step \ref{item:update} over $n\in \N$ until the $GZ$ numbers $\set{c_n(a_g^n(z),z)}_{g=1}^G{_{z=1}^Z}$ converge.
\end{enumerate}
\end{oneshot}
\end{framed}

Note that we avoid the root-finding routine in Step \ref{item:update}\ref{item:update1} due to the endogenous grid selection. 

\subsection{Numerical example}

To illustrate the computational efficiency of policy iteration, we solve a numerical example in this section.

\paragraph{Model specification}
The agent has CRRA utility with constant discount factor $\beta>0$ and relative risk aversion $\gamma>0$. There are only two states, so $\ZZ=\set{1,2}$, which we interpret as expansion and recession. Letting $Z_t$ be the Markov state at time $t$, labor income is $Y_t=Y(Z_t)\e^{gt}$, where $g$ is the growth rate of the trend. Suppose the agent invests a constant fraction of wealth $\theta\in (0,1)$ into a risky asset whose return is lognormal (with conditional log mean and volatility depending only on the current state $Z_t$) and the remaining fraction $1-\theta$ into a risk-free asset. Therefore, the asset return is
\begin{equation}
R(Z_{t-1},Z_t,\zeta_t)=R_f(\theta \exp(\mu(Z_t)+\sigma(Z_t)\zeta_t)+1-\theta),\label{eq:Rtheta}
\end{equation}
where $\mu(Z_t)$ and $\sigma(Z_t)$ are the conditional log risk premium and volatility, $\zeta_t\sim \iid N(0,1)$, and $R_f>0$ is the gross risk-free rate. Although our theory requires a stationary income process, due to homotheticity it is straightforward to allow for a trend. After simple algebra (\eg, Section 2.2 of \citealp*{Carroll2021}), instead of \eqref{eq:betaRY}, it suffices to use
\begin{subequations}\label{eq:betaRYtilde}
\begin{align}
\tilde{\beta}_t&=\beta \e^{(1-\gamma)g},\\
\tilde{R}_t&=R(Z_{t-1},Z_t,\zeta_t)\e^{-g},\\
\tilde{Y}_t&=Y_t\e^{-gt}=Y(Z_t),
\end{align}
\end{subequations}
which are stationary.

We set the parameters as follows. We suppose that one period is a month and set $\beta=\e^{-0.04/12}$ (4\% annual discounting) and $\gamma=3$, which are standard. For the Markov state $Z_t$, we use the 1947-2019 NBER recession indicator\footnote{\url{https://fred.stlouisfed.org/series/USREC}} and estimate the transition probability matrix
\begin{equation*}
P=\begin{bmatrix}
0.9854 & 0.0146\\
0.0902 & 0.9098
\end{bmatrix}
\end{equation*}
from the mean duration of expansions and recessions. For the asset return in \eqref{eq:Rtheta}, we set $\theta=0.6$, which is close to the calibrated value in \cite{MaToda2021JET}, and use the spreadsheet of \cite{welch-goyal2008}\footnote{\url{http://www.hec.unil.ch/agoyal/docs/PredictorData2019.xlsx}} to construct real log returns and estimate $\log R_f=5.251\times 10^{-4}$ (annual rate 0.63\%), $(\mu(1),\mu(2))=10^{-3}\times (6.8111,-1.7201)$, and $(\sigma(1),\sigma(2))=(0.0383, 0.0559)$. The growth rate $g=1.6213\times 10^{-3}$ is computed from the real per capita GDP growth.\footnote{\url{https://fred.stlouisfed.org/series/A939RX0Q048SBEA}} Finally, we set $(Y(1),Y(2))=(1,0.5)$ to make the graphs stand out. For computational purposes, we discretize the \iid shock $\zeta$ using the 7-point Gauss-Hermite quadrature.

\paragraph{Consumption functions and asymptotic MPCs}
We solve for consumption functions by policy iteration on a 1,000-point exponential grid for saving $\cS_G$ in the range of $[0,10^{6}]$.\footnote{See 
	Appendix~\ref{sec:expgrid} for the construction of the exponential grid, which is based on \cite{Gouin-BonenfantTodaParetoExtrapolation}. We use the median grid point $s=10$.} 
We choose the convergence criterion such that policy iteration stops when the maximum relative change from the previous iteration satisfies
\begin{equation*}
\label{eq:crit}
    \max_{g,z} \abs{
        \frac{c_n(a_g^n (z), z)}{c_{n-1} (a_g^{n-1}(z), z)} - 1} < \epsilon = 10^{-5}.
\end{equation*}
At a small scale (Figure~\ref{fig:cf_small}), the consumption functions show a concave pattern, which is consistent with \cite{CarrollKimball1996}. At a large scale (Figure~\ref{fig:cf_large}), the consumption functions look linear, which is consistent with Theorem~\ref{thm:AL_inf} and Proposition~\ref{prop:CRRA}.

\begin{figure}[!htb]
\centering
\begin{subfigure}{0.465\linewidth}
\includegraphics[width=\linewidth]{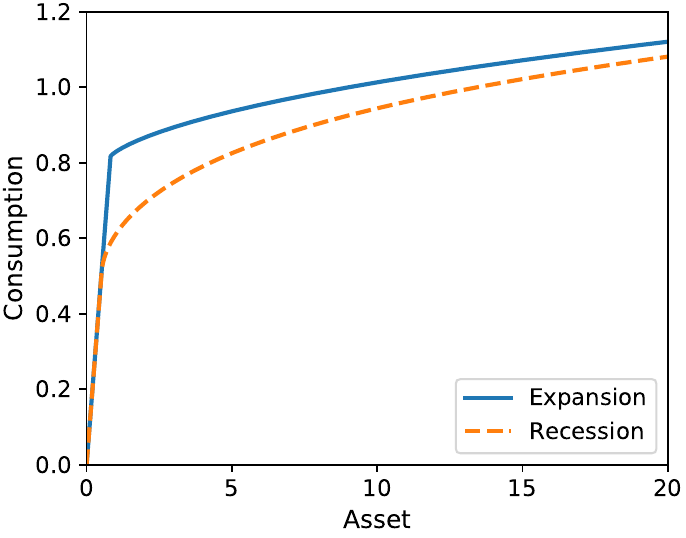}
\caption{Small scale.}\label{fig:cf_small}
\end{subfigure}
\begin{subfigure}{0.465\linewidth}
\includegraphics[width=\linewidth]{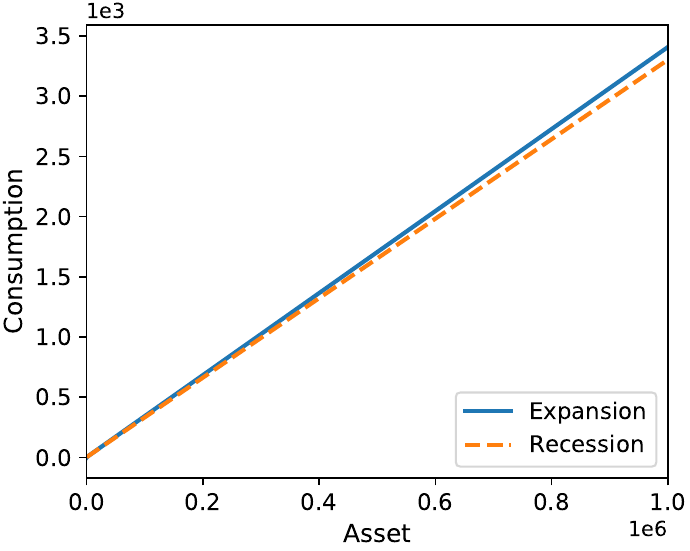}
\caption{Large scale.}\label{fig:cf_large}
\end{subfigure}
\caption{Consumption functions.}\label{fig:cf}
\end{figure}

To evaluate how fast the slope of $c(a,z)$ converges to $\bar{c}(z)$ as $a\to\infty$, we compute the marginal propensities to consume (MPCs) and their relative errors. At the grid point $a_g(z)$, the MPC and its relative error are defined as\footnote{It makes
    intuitive sense to evaluate MPC on the endogenous asset grid points. In particular, the continuity of the consumption function implies that $a_{g-1}(z) \to a_g(z)$ as $s_{g-1} \to s_g$ (see, for example, Step~\ref{item:update}\ref{item:update2} of the policy iteration algorithm).}
\begin{align*}
\mathrm{MPC}(a_g(z),z)&\coloneqq \frac{c(a_g(z),z)-c(a_{g-1}(z),z)}{a_g(z)-a_{g-1}(z)},\\
\mathrm{Error}(a_g(z),z)&\coloneqq \abs{\frac{\mathrm{MPC}(a_g(z),z)}{\bar{c}(z)}-1},
\end{align*}
respectively. Figure~\ref{fig:MPC} shows the numerical and asymptotic MPCs. Consistent with theory, the MPCs appear to converge to the theoretical values.\footnote{Although not visible from Figure~\ref{fig:MPC}, it is clear from Figure~\ref{fig:cf_large} that each state has its own limit: we have $(\bar{c}(1),\bar{c}(2))=10^{-3}\times (3.4049,3.2991)$.} Figure~\ref{fig:MPC_error} shows that the relative errors are quite small beyond $a=10^5$ (around $0.01\%$).

\begin{figure}[!htb]
\centering
\begin{subfigure}{0.465\linewidth}
\includegraphics[width=\linewidth]{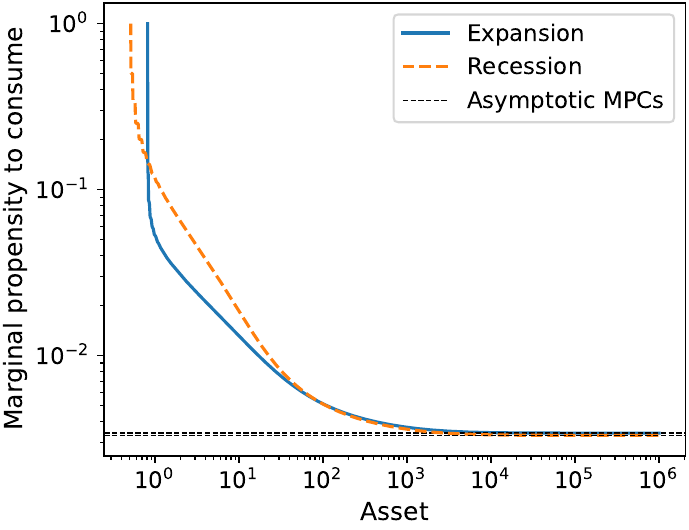}
\caption{MPCs.}\label{fig:MPC}
\end{subfigure}
\begin{subfigure}{0.465\linewidth}
\includegraphics[width=\linewidth]{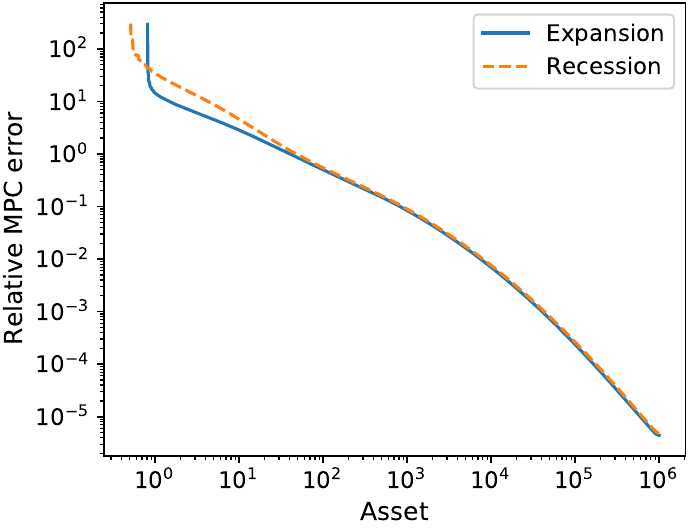}
\caption{Relative errors.}\label{fig:MPC_error}
\end{subfigure}
\caption{Marginal propensities to consume (MPCs) and relative errors.}
\end{figure}

\paragraph{The impact of maximum saving grid}

As discussed above, to compute the consumption function numerically, the state space has to be finite. Therefore, it is important to know how to set up the grid points effectively in practice. To answer this question, we study how the choice of the maximum grid point for saving affects the accuracy of the calculated consumption functions.

Our experiment is as follows: First, we treat the consumption function $c(a,z)$ in the previous section as the true consumption function, because it is calculated on a relatively large saving space $\cS_G = \set{s_g}_{g=1}^G$ with $0 = s_1 < \dots < s_G = 10^6$ and a fine exponential grid of $G = 1000$ points. We then truncate the grid points for saving to $\cS_G(\bar{s}) \coloneqq \cS_G\cap [0,\bar{s}]$, where $\bar{s}$ is the truncation point. Once this is done, for each $\bar{s}$, we compute the consumption function on the truncated grid $\cS_G(\bar{s})$, which we denote as $c(a, z; \bar{s})$, and calculate its error relative to the true consumption function via
\begin{equation*}
    \text{Error}(\bar{s}) = \max_{(a,z) \in \cA_G \times \ZZ} \abs{
        \frac{c(a,z;\bar{s})}{c(a,z)} - 1},
\end{equation*} 
where with a slight abuse of notation, we use $\cA_G \coloneqq \set{a_g(z)}_{g=1}^G{_{z=1}^Z}$ to denote the endogenous asset grid points calculated from the true consumption function $c(a,z)$. The relative error as a function of the truncation point $\bar{s}$ is displayed in Figure~\ref{fig:gmax}.

\begin{figure}[!htb]
	\centering
	\includegraphics[width=0.6\linewidth]{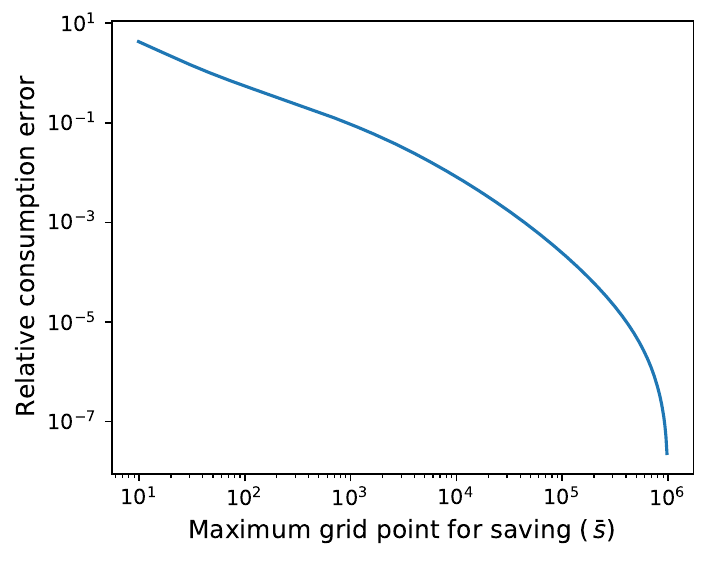}
	\caption{Relative error under different maximum saving grid point.}\label{fig:gmax}
\end{figure}

As can be seen from Figure~\ref{fig:gmax}, the error of $c(a,z;\bar{s})$ relative to $c(a,z)$ reduces greatly as the maximum grid point for saving $\bar{s}$ gets larger and becomes reasonably small for $\bar{s} > 10^4$ (below $1\%$). Intuitively, smaller saving spaces typically imply that the MPCs at the boundary endogenous asset grid points are further away from their theoretical asymptotes. Therefore, extrapolating consumption outside of the truncated space would result in larger errors. In particular, using small grids such as $\bar{s}< 10^2$ results in very large errors.\footnote{For instance, \cite{HeatonLucas1997} use a 7-point grid on $[0,3]$ for wealth to solve for consumption functions, which are likely to be inaccurate.} 

\paragraph{Computational efficiency}

To evaluate the computational efficiency of the policy iteration algorithm, we now solve for the consumption functions with various specifications for the number of grid points and initial condition. Same as before, we terminate the policy iteration algorithm at precision $\epsilon = 10^{-5}$. The number of grid points is $G\in \set{50,100,1000}$. To see how the initial guess affects the convergence speed, instead of \eqref{eq:c0}, we set
\begin{equation}
c_0(a,z;\alpha)\coloneqq \min\set{a,\bar{c}(z;\alpha)a + (1-\bar{c}(z;\alpha))\bar{a}(z)},\label{eq:c0alpha}
\end{equation}
where $\bar{c}(z;\alpha)\coloneqq \alpha+(1-\alpha)\bar{c}(z)$ and $\alpha\in \set{0, 0.001, 0.01, 0.1, 0.2, 0.5, 1}$. For instance, setting $\alpha=1$ amounts to using $c_0(a,z)\equiv a$, while setting $\alpha=0$ amounts to using \eqref{eq:c0}. A low value of $\alpha$ implies that we choose an initial guess that has an asymptotic slope closer to the true solution.

Table~\ref{t:converge} shows the number of iterations and computing time (in seconds) required for convergence for each specification. To calculate these statistics, in each case we repeat the same solution process $50$ times and then take the average. We see that that using a theoretically motivated initial guess \eqref{eq:c0} instead of $c_0(a,z)\equiv a$ speeds up the algorithm by about $1.32$ to $1.83$ times. The enhanced computational efficiency is largely because the initial guess \eqref{eq:c0} stays closer to the true consumption function compared with $c_0(a,z) \equiv a$, in which case policy iteration converges within fewer steps. Furthermore, because the policy iteration algorithm avoids costly root-finding, the computing time is relatively insensitive to the number of grid points.

\begin{table}[!htb]
	\centering
	\caption{Speed of convergence of policy iteration.}\label{t:converge}
	\vspace{-0.25cm}
	\begin{tabular}{lrrrrrr}
		\toprule
		$\alpha$ & Iterations & Time & Iterations & Time & Iterations & Time \\
		\midrule
		& \multicolumn{2}{c}{$G=50$} & \multicolumn{2}{c}{$G=100$} & \multicolumn{2}{c}{$G=\text{1,000}$}\\
		\cmidrule(lr){2-3}
		\cmidrule(lr){4-5}
		\cmidrule(lr){6-7}
		1    	& 1,716 & 0.66 & 1,714 & 0.64 & 1,712 &   2.86 \\
		0.5     & 1,715 & 0.58 & 1,713 & 0.62 & 1,711 &   2.96 \\
		0.2     & 1,712 & 0.57 & 1,710 & 0.62 & 1,708 &   2.87 \\
		0.1    	& 1,707 & 0.85 & 1,705 & 0.62 & 1,703 &   2.91 \\
		0.01   	& 1,630 & 0.53 & 1,628 & 0.60 & 1,627 &   2.77 \\
		0.001  	& 1,278 & 0.44 & 1,276 & 0.49 & 1,275 &   2.14 \\
		0 		& 958   & 0.36 & 1,100 & 0.39 & 1,286 &   2.16 \\
		\bottomrule
	\end{tabular}
	\caption*{\footnotesize Note: the table shows the number of iterations and computing time (in seconds) required for convergence. Here $\alpha$ is the parameter in \eqref{eq:c0alpha} and $G$ is the number of grid points. For each specification, the statistics are calculated by averaging $50$ repeated experiments.}
\end{table}

Seen from Table~\ref{t:converge}, it is fair to predict that policy iteration tends to be more time-consuming as $\alpha$ becomes larger. Figure~\ref{fig:iter_time} plots the steps and time taken for policy iteration to converge over a fine grid for $\alpha$ and inspects this conjecture. In particular, we fix the number of grid points for saving at $G=1000$ and use an exponential grid for $\alpha$ in the range of $[10^{-5}, 1]$ with $50$ points. Same to Table~\ref{t:converge}, we repeat the solution process $50$ times for each specification and calculate the mean computing time. Figure~\ref{fig:iter} shows that the step for policy iteration to converge is strictly increasing in $\alpha$ unless $\alpha$ takes very small values (lower than $10^{-3}$), while Figure~\ref{fig:time} reveals a clear increasing trend in computing time as $\alpha$ gets larger. Intuitively, since a higher $\alpha$ tends to shift the initial guess further away from the true consumption function, policy iteration will take more steps to converge and the problem will be more computationally intensive.

\begin{figure}[!htb]
	\centering
	\begin{subfigure}{0.465\linewidth}
		\includegraphics[width=\linewidth]{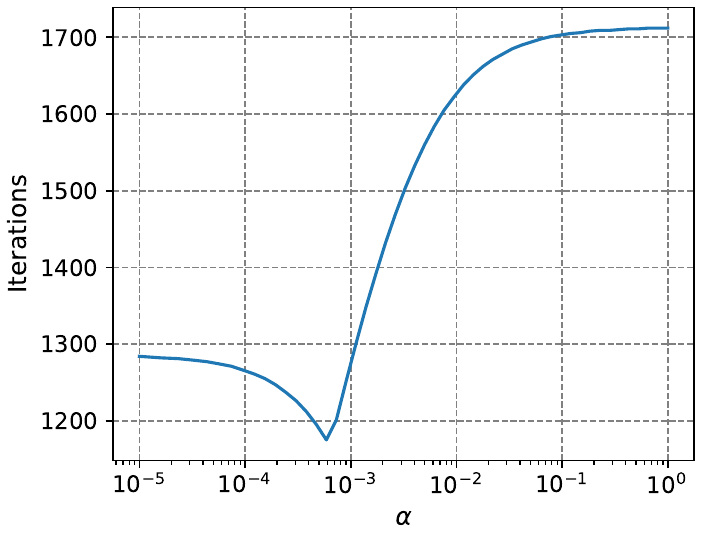}
		\caption{Number of iterations.}\label{fig:iter}
	\end{subfigure}
	\begin{subfigure}{0.465\linewidth}
		\includegraphics[width=\linewidth]{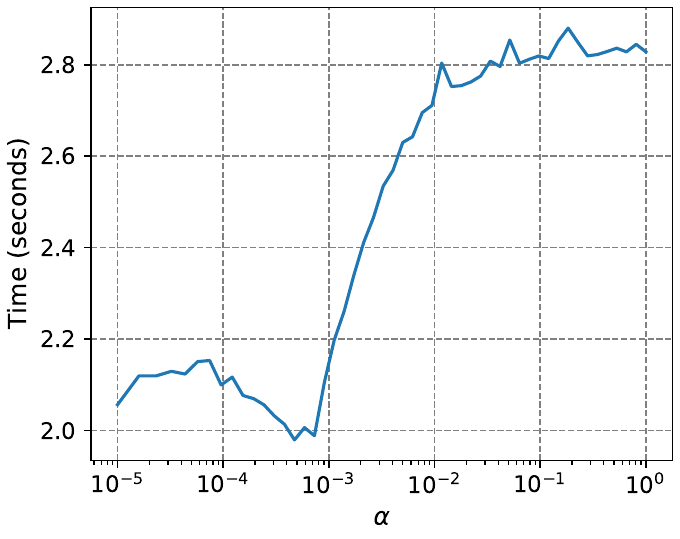}
		\caption{Time in seconds.}\label{fig:time}
	\end{subfigure}
	\caption{Speed of convergence under different $\alpha$'s.}\label{fig:iter_time}
\end{figure}

\section{Concluding remarks}\label{sec:conclu}

In this paper, we have systematically studied the asymptotic behavior of the consumption function when the marginal utility is regularly varying. We have shown that the optimal consumption function of the finite horizon optimal savings problem is always asymptotically linear in wealth. For infinite horizon problem, we have derived an asymptotic upper bound for MPC and shown that, whenever the spectral radius condition $r(K(1-\gamma)) < 1$ holds and the limit infimum of MPC as asset tends to infinity is positive, the consumption function is asymptotically linear in wealth and the asymptotic MPC coincides with the upper bound we establish. Furthermore, we have established a list of sufficient conditions for asymptotic linearity based on bounded relative risk aversion or asymptotic constant relative risk aversion.

Our results build a theoretical foundation for linearly extrapolating consumption outside the grid points when solving the optimal savings problem numerically. This in turn allows us to construct good initial guesses for policy iteration and solve the problem efficiently and accurately. Our numerical experiments have demonstrated that working with the initial guess defined in \eqref{eq:c0} speeds up computation obviously and that working with a large saving/asset space created by exponential grid is necessary for improving solution accuracy.

\appendix
\section{Proofs}\label{sec:proof}

We need the following result to prove Proposition \ref{prop:RV}.

\begin{thm}[Representation Theorem]\label{thm:RT}
Let $\ell:(0,\infty)\to (0,\infty)$ be measurable. Then $\ell$ is slowly varying if and only if it may be written in the form
\begin{equation}
\ell(x)=\exp\left(\eta(x)+\int_a^x \frac{\varepsilon(t)}{t}\diff t\right) \label{eq:RT}
\end{equation}
for $x\ge a$ for some $a>0$, where $\eta,\varepsilon$ are measurable functions such that $\eta(x)\to \eta\in \R$ and $\varepsilon(x)\to 0$ as $x\to\infty$.
\end{thm}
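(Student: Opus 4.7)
The plan is to handle the two directions separately; sufficiency is elementary, while necessity is Karamata's Representation Theorem and carries the real content.

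For sufficiency, suppose $\ell$ has the form \eqref{eq:RT}. For each $\lambda>0$, a direct computation gives
\[
\log\frac{\ell(\lambda x)}{\ell(x)}=\eta(\lambda x)-\eta(x)+\int_x^{\lambda x}\frac{\varepsilon(t)}{t}\diff t.
\]
The first difference tends to $\eta-\eta=0$ by hypothesis, and after the substitution $t=xu$ the integral becomes $\int_1^\lambda \varepsilon(xu)/u\,\diff u$. Since $\varepsilon(s)\to 0$ it is bounded on some $[a,\infty)$, so dominated convergence drives this integral to zero. Hence $\ell(\lambda x)/\ell(x)\to 1$.

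For necessity, I would proceed in three steps. Step~1 (Uniform Convergence Theorem) promotes pointwise convergence to uniform convergence: $\ell(\lambda x)/\ell(x)\to 1$ uniformly for $\lambda$ in any compact subset of $(0,\infty)$. The standard argument substitutes $h(u)=\log\ell(e^u)$ and shows, using measurability of $h$, that for each $\delta>0$ the sets $E_n=\{u\ge n:\sup_{|v|\le 1}|h(u+v)-h(u)|<\delta\}$ eventually cover a full half-line; this is a Steinhaus/Lebesgue-density type argument. Step~2 (smoothing) introduces
\[
\tilde\ell(x)=\frac{1}{\log 2}\int_x^{2x}\ell(t)\,\frac{\diff t}{t}.
\]
By the uniform convergence on $[1,2]$ one deduces $\tilde\ell(x)/\ell(x)\to 1$, while $\tilde\ell$ is absolutely continuous with logarithmic derivative $x\tilde\ell'(x)/\tilde\ell(x)\to 0$. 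Step~3 puts things together: set
\[
\varepsilon(x)=\frac{x\tilde\ell'(x)}{\tilde\ell(x)},\qquad \eta(x)=\log\frac{\ell(x)}{\tilde\ell(x)}+\log\tilde\ell(a),
\]
so that $\varepsilon(x)\to 0$ and $\eta(x)\to \log\tilde\ell(a)\in\R$. Integrating the logarithmic derivative of $\tilde\ell$ gives $\log\tilde\ell(x)-\log\tilde\ell(a)=\int_a^x\varepsilon(t)/t\,\diff t$, and adding $\log[\ell(x)/\tilde\ell(x)]$ to both sides yields \eqref{eq:RT}.

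The main obstacle is Step~1. Without measurability one can construct slowly varying functions for which uniformity fails, so the proof genuinely exploits the hypothesis, and the Steinhaus-type argument that converts measure-theoretic smallness on $E_n$ into uniform tail control is the non-routine content. Once uniformity is secured, Steps~2 and 3 reduce to standard bookkeeping via absolute continuity and the fundamental theorem of calculus.
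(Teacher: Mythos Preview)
Your sketch is correct and is essentially the standard Karamata--de~Bruijn argument. However, the paper does not actually prove this statement: its ``proof'' is a one-line citation to \citet[Theorem 1.3.1]{BinghamGoldieTeugels1987}. So you have supplied far more than the paper does; your three-step outline (UCT via a Steinhaus-type argument, smoothing by the logarithmic average $\tilde\ell$, then reading off $\eta$ and $\varepsilon$) is one of the standard routes to the result and is close in spirit to what one finds in that reference.
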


\begin{proof}
See \citet[Theorem 1.3.1]{BinghamGoldieTeugels1987}.
\end{proof}

\begin{proof}[Proof of Proposition \ref{prop:RV}] 
\begin{step}
$\aCRRA(\gamma)\subset \RV(-\gamma)$.
\end{step}
If $u$ is asymptotically CRRA with coefficient $\gamma$, by definition we can write 
$$\frac{cu''(c)}{u'(c)}=-\gamma+\varepsilon(c),$$
where $\varepsilon(c)\to 0$ as $c\to\infty$. Dividing both sides by $c>1$ and integrating on $[1,c]$, we obtain
\begin{align*}
&\log \frac{u'(c)}{u'(1)}=-\gamma\log c+\int_1^c \frac{\varepsilon(t)}{t}\diff t \\
\iff &u'(c)=c^{-\gamma}\exp\left(\log u'(1)+\int_1^c \frac{\varepsilon(t)}{t}\diff t\right).
\end{align*}
Therefore by Theorem \ref{thm:RT}, $u'$ is regularly varying with index $-\gamma$ by setting $u'(c)=c^{-\gamma}\ell(c)$ with $\ell$ defined by \eqref{eq:RT} with $\eta(x)\equiv \log u'(1)$ and $a=1$.

\begin{step}
$\RV(-\gamma) \not\subset \aCRRA(\gamma)$.
\end{step}
Consider the marginal utility function $u'(c)=c^{-\gamma}\ell(c)$ for $\ell$ in \eqref{eq:RT}, where
$$\eta(x)=\delta\int_0^x\frac{\sin t}{t}\diff t$$
for some $\delta\in (0,\gamma)$ and $\varepsilon(x)\equiv 0$. Noting that $\int_0^\infty \sin t/t\diff t=\pi/2$,\footnote{See \cite{Hardy1909} for an interesting discussion of this integral.} we have $\eta(x)\to \pi\delta/2$ as $x\to\infty$, so by Theorem \ref{thm:RT}, $\ell$ is slowly varying and $u\in \RV(-\gamma)$. However, log differentiating $u'$, we obtain
$$\frac{u''(c)}{u'(c)}=-\frac{\gamma}{c}+\delta\frac{\sin c}{c}\iff -\frac{cu''(c)}{u'(c)}=\gamma-\delta\sin c>0,$$
so $u''<0$ always but $-cu''(c)/u'(c)$ does not converge as $c\to\infty$. Therefore $u\notin \aCRRA(\gamma)$.

\begin{step}
$\RV(-\gamma)\subset \aE(-\gamma)$.
\end{step}
If $u'$ is regularly varying with index $-\gamma$, by Theorem \ref{thm:RT} we can write $u'(c)=c^{-\gamma}\ell(c)$ with $\ell$ as in \eqref{eq:RT}. Take any $\delta>0$. Since $\varepsilon(c)\to 0$ as $c\to\infty$, we can take $\bar{c}>\max\set{a,1}$ such that $\abs{\varepsilon(c)}\le \delta$ for $c\ge \bar{c}$. Then
\begin{align*}
\abs{\log \ell(c)}&=\abs{\eta(c)+\int_a^c\frac{\varepsilon(t)}{t}\diff t}\le \abs{\eta(c)}+\int_a^c\frac{\abs{\varepsilon(t)}}{t}\diff t\\
&\le \abs{\eta(c)}+\int_a^{\bar{c}}\frac{\abs{\varepsilon(t)}}{t}\diff t+\int_{\bar{c}}^c\frac{\delta}{t}\diff t\\
&=\abs{\eta(c)}+\int_a^{\bar{c}}\frac{\abs{\varepsilon(t)}}{t}\diff t+\delta\log\frac{c}{\bar{c}}.
\end{align*}
Dividing both sides by $\log c>\log \bar{c}>0$ and letting $c\to\infty$, noting that $\eta(c)\to \eta$ as $c\to\infty$, we obtain
$$\limsup_{c\to\infty}\abs{\frac{\log \ell(c)}{\log c}}\le \delta.$$
Since $\delta>0$ is arbitrary, letting $\delta\downarrow 0$, we obtain
$$\lim_{c\to\infty}\frac{\log \ell(c)}{\log c}=0.$$
Therefore $\log u'(c)/\log c\to -\gamma$ because $u'(c)=c^{-\gamma}\ell(c)$, and by definition $u'$ has an asymptotic exponent $-\gamma$.

\begin{step}
$\aE(-\gamma) \not\subset \RV(-\gamma)$.
\end{step}
Consider the marginal utility function $u'(c)=c^{-\gamma}\exp(\delta \sin \log c)$, where $\delta\in (0,\gamma)$. Then
$$\frac{\log u'(c)}{\log c}=-\gamma+\delta\frac{\sin \log c}{\log c}\to -\gamma$$
as $c\to\infty$, so $u\in \aE(-\gamma)$. Furthermore, by log differentiating $u'$, we obtain
$$\frac{u''(c)}{u'(c)}=-\frac{\gamma}{c}+\delta\frac{\cos \log c}{c}\iff -\frac{cu''(c)}{u'(c)}=\gamma-\delta\cos\log c>0,$$
so $u''<0$ always. To show $u\notin \RV(-\gamma)$, it suffices to show that $\ell(c)\coloneqq \exp(\delta \sin \log c)$ is not slowly varying. Take $\lambda=\e^{\pi}$ and $c_n=\e^{\pi(n-1/2)}$. Then
\begin{align*}
\log \frac{\ell(\lambda c_n)}{\ell(c_n)}&=\delta(\sin \log(\lambda c_n)-\sin \log c_n)\\
&=\delta (\sin (\pi(n+1/2))-\sin (\pi(n-1/2)))\\
&=2(-1)^n\delta,
\end{align*}
which does not converge as $n\to\infty$. Therefore $\ell$ is not slowly varying.
\end{proof}

Before proving Theorems~\ref{thm:AL_finite} and \ref{thm:AL_inf}, we first proceed heuristically to motivate what the value of $\bar{c}(z)=\lim_{a\to\infty}c(a,z)/a$ should be if it exists. Assuming that the borrowing constraint does not bind, the Euler equation \eqref{eq:xi} implies
$$u'(\xi)=\E_z\hat{\beta}\hat{R}u'(c(\hat{R}(a-\xi)+\hat{Y},\hat{Z})),$$
where $\xi=c(a,z)$. Multiplying both sides by $a^\gamma$, setting $c(a,z)=\bar{c}(z)a$ motivated by \eqref{eq:MPC_lim}, letting $a\to\infty$, using Assumption \ref{asmp:regular} (regular variation), and interchanging expectations and limits, it must be
$$\bar{c}(z)^{-\gamma}=\E_z\hat{\beta}\hat{R}^{1-\gamma}\bar{c}(\hat{Z})^{-\gamma}(1-\bar{c}(z))^{-\gamma}.$$
Dividing both sides by $(1-\bar{c}(z))^{-\gamma}$ and setting $x(z)=\bar{c}(z)^{-\gamma}$, we obtain
\begin{equation}
x(z)=\left(1+\left(\E_z\hat{\beta}\hat{R}^{1-\gamma}x(\hat{Z})\right)^{1/\gamma}\right)^\gamma,\quad  z=1,\dots,Z.\label{eq:xzeq}
\end{equation}
Noting that
\begin{equation}
\E_z\hat{\beta}\hat{R}^{1-\gamma}x(\hat{Z})=\sum_{\hat{z}=1}^ZP_{z\hat{z}}\E_{z,\hat{z}}\hat{\beta}\hat{R}^{1-\gamma}x(\hat{z})\label{eq:Ezsum}
\end{equation}
and using the definition of $K$ in \eqref{eq:Ktheta}, we can rewrite \eqref{eq:xzeq} as $x=Fx$, where $F$ is defined in \eqref{eq:F}.

We apply policy function iteration to prove Theorems~\ref{thm:AL_finite} and \ref{thm:AL_inf}. Let $\cC$ be the space of candidate consumption functions as defined in Section \ref{sec:IF}. We further restrict the candidate space to satisfy asymptotic linearity:
\begin{equation}
\cC_1=\set{c\in \cC|(\forall z\in \ZZ) \exists \bar{c}(z)=\lim_{a\to\infty}\frac{c(a,z)}{a}\in (0,1]}.\label{eq:C1}
\end{equation}
Clearly $\cC_1$ is nonempty because $c(a,z)\equiv a$ belongs to $\cC_1$.

The following proposition shows that the operator $T$ defined in Section \ref{sec:IF} maps the candidate space $\cC_1$ into itself and also shows how the asymptotic MPCs of $c$ and $Tc$ are related.

\begin{prop}\label{prop:C1}
Suppose Assumptions~\ref{asmp:Inada}, \ref{asmp:spectral}\ref{item:Ebeta}\ref{item:EY}, \ref{asmp:regular}, and \ref{asmp:R} hold. Then $T\cC_1\subset \cC_1$. For $c\in \cC_1$, let $\bar{c}(z)=\lim_{a\to\infty}c(a,z)/a$ and $x(z)=\bar{c}(z)^{-\gamma}\in [1,\infty)$. Then
\begin{equation}
\lim_{a\to\infty}\frac{Tc(a,z)}{a}=(Fx)(z)^{-1/\gamma},\label{eq:Tcbar}
\end{equation}
where $F$ is as in \eqref{eq:F}.
\end{prop}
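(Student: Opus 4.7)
The plan is to start from the Euler equation characterizing $\xi(a) := Tc(a,z)$, divide by $u'(a)$, and read $\lim_{a\to\infty}\xi(a)/a$ directly off the regular variation of $u'$ and the hypothesized asymptotic linearity of $c$. Since $\phi(a) := \xi(a)/a \in (0,1]$, Bolzano--Weierstrass reduces the problem to showing that every subsequential limit $\lambda^* := \lim_n \phi(a_n)$ (with $a_n \to \infty$) equals $(Fx)(z)^{-1/\gamma}$; uniqueness of the subsequential limit then delivers both $Tc \in \CC_1$ and the announced formula \eqref{eq:Tcbar}.

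First dispose of the degenerate case where $\hat{\beta}\hat{R} = 0$ almost surely conditional on $Z = z$: the Euler right-hand side in \eqref{eq:xi} vanishes, so $u'(\xi(a)) = u'(a)$ forces $\xi(a) = a$ and $\lambda^* = 1$, while simultaneously $D(\hat{z},\hat{z}) = \E_{\hat{z}}\hat{\beta}\hat{R}^{1-\gamma} = 0$ whenever $P(z,\hat{z}) > 0$ makes $(PDx)(z) = 0$ and $(Fx)(z)^{-1/\gamma} = 1$, matching. Otherwise $u'(a)\to 0$ while the Euler right-hand side stays bounded below (using the monotonicity bound $c(\hat{R}(a-\xi(a))+\hat{Y},\hat{Z}) \ge c(\hat{Y},\hat{Z})$ and $\E_z \hat{\beta}\hat{R}\,u'(c(\hat{Y},\hat{Z})) > 0$), so the borrowing constraint is slack for $a$ large and the interior Euler equation
$$u'(\xi(a)) = \E_z \hat{\beta}\hat{R}\,u'\bigl(c(\hat{R}(a-\xi(a))+\hat{Y},\hat{Z})\bigr)$$
holds. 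Writing $u'(c) = c^{-\gamma}\ell(c)$ with $\ell$ slowly varying and dividing by $u'(a)$, the uniform convergence theorem for slowly varying functions \citep[Thm.~1.2.1]{BinghamGoldieTeugels1987} yields
$$\frac{u'(\xi(a))}{u'(a)} = \phi(a)^{-\gamma}\,\frac{\ell(\xi(a))}{\ell(a)} \longrightarrow (\lambda^*)^{-\gamma}$$
along any subsequence with $\lambda^* > 0$, and to $+\infty$ when $\lambda^* = 0$ (by Potter's bounds, using that $\xi(a) \to \infty$ by monotonicity of the consumption function).

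For the right-hand side, the pointwise analysis is carried out on the event $\hat{R} \ge \delta > 0$ afforded by Assumption~\ref{asmp:R}. There $\hat{R}(a-\xi(a))+\hat{Y} = a\hat{R}(1-\phi(a)) + \hat{Y} \to \infty$ provided $\lambda^* < 1$, and the asymptotic linearity of $c$ together with the uniform convergence theorem give the pointwise limit
$$\hat{\beta}\hat{R}\,\frac{u'(c(\hat{R}(a-\xi(a))+\hat{Y},\hat{Z}))}{u'(a)} \longrightarrow \hat{\beta}\hat{R}^{1-\gamma}\,x(\hat{Z})\,(1-\lambda^*)^{-\gamma},$$
with the conventions $0\cdot\infty = 0$ and $0^{-\gamma} = \infty$; on $\{\hat{R}=0\}$ the prefactor annihilates everything. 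Taking $\E_z$ collapses the sum to $(PDx)(z)(1-\lambda^*)^{-\gamma}$. Equating with the left-hand side gives $(\lambda^*)^{-\gamma} = (PDx)(z)(1-\lambda^*)^{-\gamma}$, i.e.\ $\lambda^*\bigl(1+(PDx)(z)^{1/\gamma}\bigr) = 1$, which rearranges to $\lambda^* = (Fx)(z)^{-1/\gamma}$. A short comparison argument rules out $\lambda^* = 1$ in this generic regime: if $\lambda^* = 1$, the bound $c(\hat{R}(a-\xi)+\hat{Y},\hat{Z}) \le \hat{R}(a-\xi)+\hat{Y}$ combined with regular variation forces the right-hand side to $+\infty$, contradicting LHS $\to 1$.

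The main obstacle is justifying the interchange of expectation and limit on the right-hand side, for which the plan is to invoke Dominated Convergence. The integrand can blow up where $c(\hat{R}(a-\xi(a))+\hat{Y},\hat{Z})$ is small relative to $a$, and this is controlled via Potter's inequality
$$\frac{u'(c)}{u'(a)} \le B\bigl((c/a)^{-\gamma-\varepsilon} + (c/a)^{-\gamma+\varepsilon}\bigr) \qquad (c,a \ge A)$$
combined with the monotonicity lower bound $c(\hat{R}(a-\xi)+\hat{Y},\hat{Z}) \ge c(\hat{Y},\hat{Z})$ to handle the region where $c$ is close to $\hat{Y}$; the resulting contribution is integrable through the moment hypothesis $\E_z u'(Y) < \infty$ in Assumption~\ref{asmp:spectral}\eqref{item:EY}, while the bulk is absorbed by Lemma~\ref{lem:EbetaR} and Assumption~\ref{asmp:spectral}\eqref{item:Ebeta}. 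The edge case $\lambda^* = 0$ is handled symmetrically by Fatou: the LHS blow-up forces $(PDx)(z) = \infty$, hence $(Fx)(z)^{-1/\gamma} = 0 = \lambda^*$, closing the subsequence argument.
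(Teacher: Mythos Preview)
Your overall route coincides with the paper's (Lemmas~\ref{lem:MRS}--\ref{lem:Tcbar}): take subsequential limits of $Tc(a,z)/a$, read off the limit from regular variation of $u'$, and conclude by uniqueness. The degenerate case $\hat\beta\hat R=0$ a.s., the ruling out of $\lambda^*=1$ via Fatou, and the identification of the algebraic equation are all handled correctly. The genuine gap is in the Dominated Convergence step, and it propagates into your $\lambda^*=0$ branch.

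The envelope you propose does not work. The monotonicity bound $c(\hat R(a-\xi)+\hat Y,\hat Z)\ge c(\hat Y,\hat Z)$ gives $u'(c(\cdot))\le u'(c(\hat Y,\hat Z))$, but after dividing by $u'(a_n)\to 0$ this is not an $n$-free integrable bound; and Potter's inequality does not help either, because on states $\hat z$ with $\bar c(\hat z)=0$ the ratio $c(\cdot)/a_n\to 0$ and the Potter factor $(c(\cdot)/a_n)^{-\gamma-\varepsilon}$ diverges. The paper's remedy is to split on whether $(PDx)(z)=\E_z\hat\beta\hat R^{1-\gamma}x(\hat Z)$ is finite or infinite. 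If infinite, Fatou alone already forces any subsequential limit to be $0=(Fx)(z)^{-1/\gamma}$, so DCT is not needed. If finite, the key observation you are missing is that any $\hat z$ with $x(\hat z)=\infty$ must satisfy $P(z,\hat z)\,\E_{\hat z}\hat\beta\hat R^{1-\gamma}=0$, hence contributes literally zero to the expectation; on the remaining states $\bar c(\hat z)>0$, and then (using $\lambda^*<1$ from Fatou and $\hat R\ge\delta$) $c(\cdot)/a_n$ is eventually bounded below by a fixed $\ubar\lambda>0$, so $u'(c(\cdot))/u'(a_n)\le M$ for a constant $M$ and $M\hat\beta\hat R$ serves as the dominating function. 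Your $\lambda^*=0$ argument is then also backwards: Fatou gives only a lower bound on the right-hand side, so ``LHS blows up'' cannot force $(PDx)(z)=\infty$; the correct organization is by $(PDx)(z)$ (finite vs.\ infinite), not by $\lambda^*$.
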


To prove Proposition~\ref{prop:C1}, we need the following lemma.

\begin{lem}\label{lem:MRS}
Let $f:(0,\infty)\to(0,\infty)$ be a positive measurable function such that $\lambda=\lim_{c\to\infty}f(c)\in [0,\infty)$ exists. If Assumptions \ref{asmp:Inada} and \ref{asmp:regular} hold, then
\begin{equation}
\lim_{c\to\infty} \frac{u'(f(c)c)}{u'(c)}=\lambda^{-\gamma}.\label{eq:MRSlim}
\end{equation}
\end{lem}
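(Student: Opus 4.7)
The plan is to exploit two facts together: regular variation of $u'$ (Assumption \ref{asmp:regular}), which tells us that $u'(tc)/u'(c)\to t^{-\gamma}$ for any fixed $t>0$, and the strict monotonicity of $u'$ (since $u''<0$ by Assumption \ref{asmp:Inada}), which lets us sandwich $u'(f(c)c)$ between $u'$ evaluated at deterministic multiples of $c$. This avoids having to apply the uniform convergence theorem for slowly varying functions directly to the varying argument $f(c)c$.

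I would split into two cases according to whether $\lambda$ is positive or zero. For $\lambda>0$, fix $\varepsilon\in(0,\lambda)$. Since $f(c)\to\lambda$, eventually $(\lambda-\varepsilon)c\le f(c)c\le (\lambda+\varepsilon)c$, so monotonicity of $u'$ gives
\begin{equation*}
\frac{u'((\lambda+\varepsilon)c)}{u'(c)}\le \frac{u'(f(c)c)}{u'(c)}\le \frac{u'((\lambda-\varepsilon)c)}{u'(c)}.
\end{equation*}
Letting $c\to\infty$ and applying regular variation to the outer quotients yields
\begin{equation*}
(\lambda+\varepsilon)^{-\gamma}\le \liminf_{c\to\infty}\frac{u'(f(c)c)}{u'(c)}\le \limsup_{c\to\infty}\frac{u'(f(c)c)}{u'(c)}\le (\lambda-\varepsilon)^{-\gamma},
\end{equation*}
and sending $\varepsilon\downarrow 0$ collapses the sandwich to $\lambda^{-\gamma}$.

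For $\lambda=0$, the target $\lambda^{-\gamma}=+\infty$, so I only need a lower bound. For arbitrary $M>0$, eventually $f(c)\le M$, and monotonicity of $u'$ together with regular variation gives
\begin{equation*}
\liminf_{c\to\infty}\frac{u'(f(c)c)}{u'(c)}\ge \lim_{c\to\infty}\frac{u'(Mc)}{u'(c)}=M^{-\gamma}.
\end{equation*}
Since $M>0$ was arbitrary and $M^{-\gamma}\to\infty$ as $M\downarrow 0$, the liminf (and hence the limit) is $+\infty$, which matches $\lambda^{-\gamma}$ under the convention $0^{-\gamma}=+\infty$.

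There is no real obstacle; the only subtlety is that in the $\lambda=0$ case one might be tempted to invoke Potter-type bounds directly on $u'(f(c)c)/u'(c)=f(c)^{-\gamma}\ell(f(c)c)/\ell(c)$, but this can fail when $f(c)c$ does not tend to infinity (e.g., $f(c)=1/c$). The monotonicity-plus-sandwich approach sidesteps this entirely because it only ever compares $u'$ at arguments that do tend to infinity.
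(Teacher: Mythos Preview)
Your proof is correct and is essentially identical to the paper's: both split into the cases $\lambda>0$ and $\lambda=0$, use the strict monotonicity of $u'$ to sandwich $u'(f(c)c)/u'(c)$ between $u'(\mu c)/u'(c)$ for fixed multiples $\mu$ bracketing $\lambda$, apply regular variation to these outer ratios, and then let the brackets collapse to $\lambda$. The only cosmetic difference is notation ($\lambda\pm\varepsilon$ versus the paper's $\ubar{\lambda},\bar{\lambda}$).
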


\begin{proof}
Suppose $\lambda>0$. Take any numbers $\ubar{\lambda},\bar{\lambda}$ such that $0<\ubar{\lambda}<\lambda<\bar{\lambda}$. Since $f(c)\to\lambda$ as $c\to\infty$, there exists $\ubar{c}>0$ such that $f(c)\in [\ubar{\lambda},\bar{\lambda}]$ for $c\ge \ubar{c}$. Since $u'$ is strictly decreasing by Assumption \ref{asmp:Inada}, it follows that $u'(\ubar{\lambda}c)\ge u'(f(c)c)\ge u'(\bar{\lambda}c)$ for $c\ge \ubar{c}$. Dividing both sides by $u'(c)$, letting $c\to\infty$, and using Assumption \ref{asmp:regular}, we obtain
$$\ubar{\lambda}^{-\gamma}=\lim_{c\to\infty}\frac{u'(\ubar{\lambda}c)}{u'(c)}\ge \limsup_{c\to\infty}\frac{u'(f(c)c)}{u'(c)}\ge \liminf_{c\to\infty}\frac{u'(f(c)c)}{u'(c)}\ge \lim_{c\to\infty}\frac{u'(\bar{\lambda} c)}{u'(c)}=\bar{\lambda}^{-\gamma}.$$
Letting $\ubar{\lambda},\bar{\lambda}\to\lambda$, we obtain \eqref{eq:MRSlim}.

If $\lambda=0$, take any $\bar{\lambda}>0$. By the same argument as above, we obtain
$$\liminf_{c\to\infty}\frac{u'(f(c)c)}{u'(c)}\ge \lim_{c\to\infty}\frac{u'(\bar{\lambda} c)}{u'(c)}=\bar{\lambda}^{-\gamma},$$
so letting $\bar{\lambda}\downarrow 0$, we obtain \eqref{eq:MRSlim} (and both sides are $\infty$).
\end{proof}

\begin{proof}[Proof of Proposition~\ref{prop:C1}]
Let $c\in \cC_1$ and $\bar{c}(z)=\lim_{a\to\infty}c(a,z)/a\in (0,1]$.

For $\alpha\in [0,1]$, define
\begin{equation}
g_c(\alpha,a,z)=\frac{u'(\alpha a)}{u'(a)}-\min\set{\max\set{\E_z \hat{\beta}\hat{R}\frac{u'(c(\hat{R}(1-\alpha)a+\hat{Y},\hat{Z}))}{u'(a)},1},\frac{u'(0)}{u'(a)}}.\label{eq:gc}
\end{equation}
By Assumption \ref{asmp:Inada}, if $u'(0)<\infty$, then $g_c$ is continuous and strictly decreasing in $\alpha\in [0,1]$ with $g_c(0,a,z)\ge 0$ and $g_c(1,a,z)\le 0$. If $u'(0)=\infty$, then $g_c$ is continuous and strictly decreasing in $\alpha\in (0,1]$ with $g_c(0,a,z)=\infty$ and $g_c(1,a,z)\le 0$. In either case, by the intermediate value theorem, for each $(a,z)$, there exists a unique $\alpha\in [0,1]$ such that $g_c(\alpha,a,z)=0$. By the definition of $T$, we have $g_c(\xi/a,a,z)=0$, where $\xi=Tc(a,z)$. Therefore $\alpha=Tc(a,z)/a$.

If $\hat{\beta}\hat{R}=0$ almost surely conditional on $Z=z$, then \eqref{eq:gc} becomes
$$g_c(\alpha,a,z)=\frac{u'(\alpha a)}{u'(a)}-1.$$
Since $\alpha=Tc(a,z)/a$ solves $g_c(\alpha,a,z)=0$, it must be $Tc(a,z)/a=\alpha=1$. Therefore in particular $\lim_{a\to\infty}Tc(a,z)/a=1$ exists and \eqref{eq:Tcbar} is trivial. Below, assume $\hat{\beta}\hat{R}>0$ with positive probability conditional on $Z=z$.

Take any accumulation point $\alpha$ of $Tc(a,z)/a\in [0,1]$ as $a\to\infty$, which always exists because $0\le Tc(a,z)/a\le 1$. Then we can take an increasing sequence $\set{a_n}$ such that $\alpha=\lim_{n\to\infty} Tc(a_n,z)/a_n$. Define $\alpha_n=Tc(a_n,z)/a_n\in [0,1]$ and
\begin{equation}
\lambda_n=\frac{c(\hat{R}(1-\alpha_n)a_n+\hat{Y},\hat{Z})}{a_n}\ge 0.\label{eq:lambdan}
\end{equation}
By the definitions of $\alpha_n$ and $\lambda_n$, we have
\begin{align}
&0=g_c(\alpha_n,a_n,z)=\frac{u'(\alpha_na_n)}{u'(a_n)}-\min\set{\max\set{\E_z\hat{\beta}\hat{R}\frac{u'(\lambda_n a_n)}{u'(a_n)},1},\frac{u'(0)}{u'(a_n)}}\notag \\
&\implies \frac{u'(\alpha_na_n)}{u'(a_n)}=\min\set{\max\set{\E_z\hat{\beta}\hat{R}\frac{u'(\lambda_n a_n)}{u'(a_n)},1},\frac{u'(0)}{u'(a_n)}}.\label{eq:eulern}
\end{align}

\begin{step}
For $\lambda_n$ in \eqref{eq:lambdan}, we have
\begin{equation}
\lim_{n\to\infty}\lambda_n=\bar{c}(\hat{Z})\hat{R}(1-\alpha).\label{eq:lambdalim}
\end{equation}
\end{step}
To see this, if $\alpha<1$ and $\hat{R}>0$, then since $\hat{R}(1-\alpha_n)a_n\to \hat{R}(1-\alpha)\cdot \infty=\infty$, by the definition of $\bar{c}$ we have
$$\lambda_n=\frac{c(\hat{R}(1-\alpha_n)a_n+\hat{Y},\hat{Z})}{\hat{R}(1-\alpha_n)a_n+\hat{Y}}\left(\hat{R}(1-\alpha_n)+\frac{\hat{Y}}{a_n}\right)\to \bar{c}(\hat{Z})\hat{R}(1-\alpha),$$
which is \eqref{eq:lambdalim}. If $\alpha=1$ or $\hat{R}=0$, then since $c(a,z)\le a$, we have
\begin{align*}
\lambda_n&=\frac{c(\hat{R}(1-\alpha_n)a_n+\hat{Y},\hat{Z})}{\hat{R}(1-\alpha_n)a_n+\hat{Y}}\left(\hat{R}(1-\alpha_n)+\frac{\hat{Y}}{a_n}\right)\\
&\le \hat{R}(1-\alpha_n)+\frac{\hat{Y}}{a_n}\to \hat{R}(1-\alpha)=0,
\end{align*}
so again \eqref{eq:lambdalim} holds.

\begin{step}
For any accumulation point $\alpha$ of $Tc(a,z)/a\in [0,1]$ as $a\to\infty$, we have $\alpha<1$.
\end{step}
Letting $n\to\infty$ in \eqref{eq:eulern}, applying Lemma \ref{lem:MRS}, and using the continuity of max and min operators, we obtain
\begin{align}
\alpha^{-\gamma}&=\lim_{n\to\infty}\frac{u'(\alpha_na_n)}{u'(a_n)} \notag \\
&=\lim_{n\to\infty}\min\set{\max\set{\E_z\hat{\beta}\hat{R}\frac{u'(\lambda_n a_n)}{u'(a_n)},1},\frac{u'(0)}{u'(a_n)}} \notag \\
&=\min\set{\max\set{\lim_{n\to\infty}\E_z\hat{\beta}\hat{R}\frac{u'(\lambda_n a_n)}{u'(a_n)},1},\lim_{n\to\infty}\frac{u'(0)}{u'(a_n)}} \notag \\
&=\max\set{\lim_{n\to\infty}\E_z\hat{\beta}\hat{R}\frac{u'(\lambda_n a_n)}{u'(a_n)},1}, \label{eq:eulerlim}
\end{align}
where the last equation uses $u'(0)/u'(a_n)\to u'(0)/u'(\infty)=\infty$ (because $u'(\infty)=0$). Applying Fatou's Lemma, \eqref{eq:lambdalim}, and Lemma \ref{lem:MRS}, it follows from \eqref{eq:eulerlim} that
\begin{align}
\alpha^{-\gamma}&\ge \E_z\hat{\beta}\hat{R}\lim_{n\to\infty}\frac{u'(\lambda_n a_n)}{u'(a_n)} \notag \\
&=\E_z \hat{\beta}\hat{R}[\bar{c}(\hat{Z})\hat{R}(1-\alpha)]^{-\gamma}\notag \\
&=\E_z \hat{\beta}\hat{R}^{1-\gamma}[\bar{c}(\hat{Z})(1-\alpha)]^{-\gamma}.\label{eq:alphaineq}
\end{align}
Since by assumption $\hat{\beta}\hat{R}>0$ with positive probability conditional on $Z=z$ and $\bar{c}(z)>0$ for all $z$ (because $c\in \cC_1$; see \eqref{eq:C1}), it follows that $\E_z \hat{\beta}\hat{R}^{1-\gamma}\bar{c}(\hat{Z})^{-\gamma}\in (0,\infty)$. Therefore solving the inequality \eqref{eq:alphaineq}, we obtain
\begin{equation*}
\alpha \le \frac{1}{1+\left(\E_z\hat{\beta}\hat{R}^{1-\gamma}\bar{c}(\hat{Z})^{-\gamma}\right)^{1/\gamma}}<1.
\end{equation*}

\begin{step}
The limit $n\to\infty$ and the expectation $\E_z$ can be interchanged in \eqref{eq:eulerlim}.
\end{step}

Note that
\begin{equation}
\E_z\hat{\beta}\hat{R}\frac{u'(\lambda_n a_n)}{u'(a_n)}=\sum_{\hat{z}=1}^ZP_{z\hat{z}}\E_{z,\hat{z}}\hat{\beta}\hat{R}\frac{u'(\lambda_n a_n)}{u'(a_n)}.\label{eq:Ez}
\end{equation}
When computing the expectation \eqref{eq:Ez}, we can divide it into the events $\hat{R}=0$ and $\hat{R}>0$. When $\hat{R}=0$, the integrand is zero. When $\hat{R}>0$, by Assumption \ref{asmp:R}, we have $\hat{R}\ge \delta>0$. By the definition of $\lambda_n$ and the monotonicity of consumption functions established in \cite{MaStachurskiToda2020JET}, it follows from the definition of $\lambda_n$ in \eqref{eq:lambdan} that
$$\lambda_n\ge \frac{c(\delta(1-\alpha_n)a_n,\hat{Z})}{a_n}\to \bar{c}(\hat{Z})\delta(1-\alpha).$$
Since $\bar{c}(z)>0$ for all $z$, $\alpha<1$, and $\ZZ$ is a finite set, for any 
$$\ubar{\lambda}\in \left(0,\min_{z\in \ZZ}\bar{c}(z)\delta(1-\alpha)\right),$$
there exists $N$ such that $\lambda_n\ge \ubar{\lambda}$ for all $n\ge N$ and $\hat{Z}\in \ZZ$. Then by Assumptions \ref{asmp:Inada} and \ref{asmp:regular}, for $n\ge N$ we have
$$\frac{u'(\lambda_na_n)}{u'(a_n)}\le \frac{u'(\ubar{\lambda} a_n)}{u'(a_n)}\to \ubar{\lambda}^{-\gamma}$$
as $n\to\infty$. Therefore for any $M\in (\ubar{\lambda}^{-\gamma},\infty)$, we have
$$\frac{u'(\lambda_na_n)}{u'(a_n)}\le M<\infty$$
for large enough $n$. Since by Assumption \ref{asmp:spectral} we have $\E_{z,\hat{z}}\hat{\beta}\hat{R}<\infty$ whenever $P_{z\hat{z}}>0$, it follows from \eqref{eq:eulerlim}, the dominated convergence theorem, and Lemma \ref{lem:MRS} that
\begin{align}
\alpha^{-\gamma}&=\max\set{\lim_{n\to\infty}\E_z\hat{\beta}\hat{R}\frac{u'(\lambda_n a_n)}{u'(a_n)},1} \notag \\
&=\max\set{\E_z\hat{\beta}\hat{R}\lim_{n\to\infty}\frac{u'(\lambda_n a_n)}{u'(a_n)},1} \notag \\
&=\max\set{\E_z \hat{\beta}\hat{R}^{1-\gamma}[\bar{c}(\hat{Z})(1-\alpha)]^{-\gamma},1}.\label{eq:alpha}
\end{align}

\begin{step}
The limit \eqref{eq:Tcbar} holds.
\end{step}

Since the left-hand side of \eqref{eq:alpha} is strictly decreasing in $\alpha$ and the right-hand side is weakly increasing in $\alpha$, the number $\alpha$ that solves \eqref{eq:alpha} is unique. Since $\alpha$ is any accumulation point of $Tc(a,z)/a\in [0,1]$ as $a\to\infty$, it follows that $\lim_{a\to\infty}Tc(a,z)/a$ exists. Therefore it only remains to show that the limit $\alpha$ equals the right-hand side of \eqref{eq:Tcbar}.

If $\alpha=0$, then \eqref{eq:alpha} implies
$$\infty=\max\set{\E_z \hat{\beta}\hat{R}^{1-\gamma}\bar{c}(\hat{Z})^{-\gamma},1}<\infty,$$
which is a contradiction. Since $\alpha<1$, \eqref{eq:alpha} implies
\begin{align*}
&\alpha^{-\gamma}=\E_z \hat{\beta}\hat{R}^{1-\gamma}[\bar{c}(\hat{Z})(1-\alpha)]^{-\gamma}\\
\iff &\alpha=\frac{1}{1+\left(\E_z\hat{\beta}\hat{R}^{1-\gamma}\bar{c}(\hat{Z})^{-\gamma}\right)^{1/\gamma}}=(Fx)(z)^{-1/\gamma}.\qedhere
\end{align*}
\end{proof}

With all the above preparations, we can prove Theorems~\ref{thm:AL_finite} and \ref{thm:AL_inf}.

\begin{proof}[Proof of Theorem~\ref{thm:AL_finite}]
By Theorem~\ref{thm:exist}, we have $c_n=T^nc_0$, where $c_0(a,z)\coloneqq a$. The limit \eqref{eq:cnbar} follows from applying Proposition~\ref{prop:C1} $n$ times.
\end{proof}

\setcounter{case}{0}
\begin{proof}[Proof of Theorem \ref{thm:AL_inf}]
Lemma B.4 of \cite{MaStachurskiToda2020JET} shows that $T:\cC\to \cC$ is order preserving, that is, $c_1\le c_2$ implies $Tc_1\le Tc_2$. Define the sequence $\set{c_n}\subset \cC$ by $c_0(a,z)=a$ and $c_n=Tc_{n-1}$ for all $n\ge 1$. Since $c_0(a,z)/a=1$, in particular we have $c_0\in \cC_1$, where $\cC_1$ is as in \eqref{eq:C1}. Therefore by Proposition \ref{prop:C1}, we have $c_n\in \cC_1$ for all $n$, so $\bar{c}_n(z)=\lim_{a\to\infty}c_n(a,z)/a\in (0,1]$ exists. Since $Tc(a,z)\le a$ for any $c\in \cC$, in particular $c_1(a,z)=Tc_0(a,z)\le a=c_0(a,z)$, so by induction $c_{n+1}\le c_n$ for all $n$. Define $c(a,z)=\lim_{n\to\infty}c_n(a,z)$, which exists because $\set{c_n}$ is monotonically decreasing and $c_n\ge 0$. Then by Theorem 2.2 of \cite{MaStachurskiToda2020JET}, this $c$ is the unique fixed point of $T$ and also the unique solution to the optimal savings problem \eqref{eq:IF}. Since $0\le c\le c_n$ point-wise, by Proposition \ref{prop:C1} we have
\begin{equation}
0\le \limsup_{a\to\infty}\frac{c(a,z)}{a}\le \limsup_{a\to\infty}\frac{c_n(a,z)}{a}=x_n(z)^{-1/\gamma},\label{eq:barcub}
\end{equation}
where $\set{x_n}_{n=1}^\infty \subset \R_+^Z$ is as in Theorem~\ref{thm:AL_finite}.

\begin{case}[$r(K(1-\gamma))\ge 1$ and $K(1-\gamma)$ is irreducible]
By Proposition 14 of \cite{MaToda2021JET}, we have $x_n(z)\to \infty$ for all $z\in \ZZ$. Letting $n\to\infty$ in \eqref{eq:barcub}, we obtain
$$\lim_{a\to\infty}\frac{c(a,z)}{a}=0.$$
\end{case}
\begin{case}[$r(K(1-\gamma))<1$]
By Proposition 14 of \cite{MaToda2021JET}, we have $x_n(z)\to x^*(z)$, where $x^*$ is the unique fixed point of $F$ in \eqref{eq:F}. Letting $n\to\infty$ in \eqref{eq:barcub}, we obtain
$$0\le \liminf_{a\to\infty}\frac{c(a,z)}{a}\le \limsup_{a\to\infty}\frac{c(a,z)}{a}\le x^*(z)^{-1/\gamma},$$
which is \eqref{eq:MPC_bound}. If $\liminf_{a\to\infty}c(a,z)/a>0$ for all $z$, we can take $\epsilon(z)>0$ such that $\liminf_{a\to\infty}c(a,z)/a>\epsilon(z)>0$. Therefore we can take $\bar{a}(z)>0$ such that $c(a,z)>\epsilon(z)a$ for $a\ge \bar{a}(z)$. Define $\ubar{a}(z)=\inf \set{a|c(a,z)\ge \epsilon(z)\bar{a}(z)}$. Since $c(a,z)$ is increasing and continuous in $a$ and $c(a,z)\le a$, we have $0<\ubar{a}(z)<\bar{a}(z)$. Furthermore, define
$$c_0(a,z)=\begin{cases}
c(a,z), & (0<a\le \ubar{a}(z))\\
\epsilon(z)\bar{a}(z), & (\ubar{a}(z)<a\le \bar{a}(z))\\
\epsilon(z)a, & (a>\bar{a}(z))
\end{cases}$$
as in Figure \ref{fig:c0}.

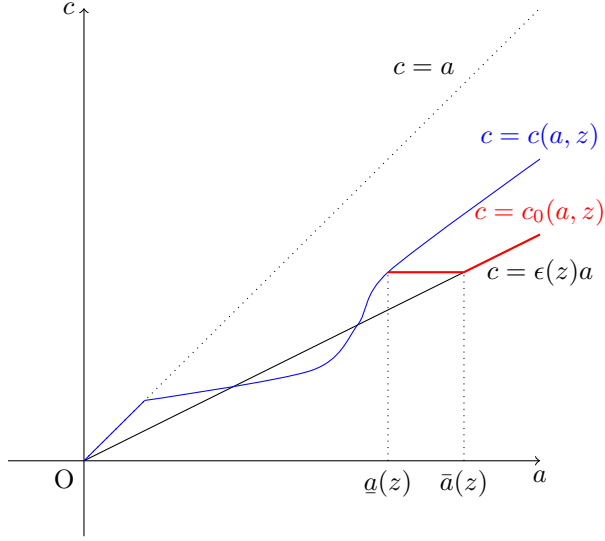
\begin{figure}[!htb]
\centering
\begin{tikzpicture}

\draw [->] (-1,0)--(6,0);
\draw [->] (0,-1)--(0,6);
\draw (0,0) node[anchor = north east] {$\mathrm{O}$};
\draw (6,0) node[anchor = north] {$a$};
\draw (0,6) node[anchor = east] {$c$};

\draw [dotted] (0,0)--(6,6);
\draw (5,5) node[anchor = south east] {$c=a$};

\draw (0,0)--(6,3);
\draw (6,2.5) node {$c=\epsilon(z)a$};

\draw [blue] (0,0)--(0.8,0.8);
\draw [blue] plot [smooth] coordinates {(0.8,0.8) (3,1.2) (3.6,1.8) (4,2.5) (6,4)};
\draw [blue] (6,4) node[anchor = south] {$c=c(a,z)$};

\draw [dotted] (4,0)--(4,2.5);
\draw [dotted] (5,0)--(5,2.5);
\draw (4,0) node[anchor = north] {$\ubar{a}(z)$};
\draw (5,0) node[anchor = north] {$\bar{a}(z)$};

\draw [red,thick] (4,2.5)--(5,2.5)--(6,3);
\draw [red] (6,3) node[anchor = south] {$c=c_0(a,z)$};

\end{tikzpicture}
\caption{Definition of $c_0(a,z)$.}\label{fig:c0}
\end{figure}

By definition, $c_0\le c$ point-wise. Let us show that $c_0\in \cC_1$. Since $c(a,z)$ is increasing and continuous in $a$, so is $c_0(a,z)$. Clearly $0<c_0(a,z)\le c(a,z)\le a$. Because $c\in \cC$ and $c_0(a,z)=c(a,z)$ for $a\le \ubar{a}(z)$, we have
$$\sup_{a\le \ubar{a}(z)}\abs{u'(c_0(a,z))-u'(a)}=\sup_{a\le \ubar{a}(z)}\abs{u'(c(a,z))-u'(a)}<\infty.$$
Since $u'>0$, $c_0(a,z)\le a$, and $u'$ is decreasing, we have
$$\sup_{a>\ubar{a}(z)}\abs{u'(c_0(a,z))-u'(a)}\le \sup_{a>\ubar{a}(z)}u'(c_0(a,z))=u'(\epsilon(z)\bar{a}(z))<\infty.$$
Since $\ZZ$ is a finite set, we have
$$\sup_{(a,z)\in (0,\infty)\times \ZZ}\abs{u'(c_0(a,z))-u'(a)}<\infty,$$
so $c_0\in \cC$. Since $c_0(a,z)=\epsilon(z)a$ for $a>\bar{a}(z)$, we have $c_0(a,z)/a\to \epsilon(z)\in (0,1]$ as $a\to\infty$, so $c_0\in \cC_1$.

Since $c\ge c_0$ and $c_0\in\cC_1$, by iteration $c\ge c_n\coloneqq T^nc_0$ for all $n$. By Proposition \ref{prop:C1}, we have
$$\liminf_{a\to\infty}\frac{c(a,z)}{a}\ge \lim_{a\to\infty}\frac{c_n(a,z)}{a}=x_n(z)^{-1/\gamma},$$
where $\set{x_n}\subset \R_+^Z$ is defined by $x_0(z)=\epsilon(z)^{-\gamma}<\infty$ and iterating $x_n=Fx_{n-1}$. By Proposition 14 of \cite{MaToda2021JET}, we have $x_n\to x^*$ as $n\to\infty$, so
\begin{equation*}
x^*(z)^{-1/\gamma}\ge \limsup_{a\to\infty}\frac{c(a,z)}{a}\ge \liminf_{a\to\infty}\frac{c(a,z)}{a}\ge x^*(z)^{-1/\gamma}.
\end{equation*}
Therefore $\lim_{a\to\infty}c(a,z)/a=x^*(z)^{-1/\gamma}$. \qedhere
\end{case}
\end{proof}

\begin{proof}[Proof of Theorem~\ref{thm:suff_cond}]
Restrict the candidate space to
\begin{equation}
\cC_2=\set{c\in \cC| c(a,z)\ge \epsilon(z)a~\text{for all}~a>A~\text{and}~z\in \ZZ}. \label{eq:C2}
\end{equation}
Clearly $\cC_2$ is nonempty because $c(a,z)\equiv a$ belongs to $\cC_2$. Let us show that $T\cC_2\subset \cC_2$. Suppose to the contrary that $T\cC_2\not\subset \cC_2$. Then there exists $c\in \cC_2$ such that for some $a>A$ and $z \in \ZZ$, we have $\xi\coloneqq Tc(a,z)<\epsilon(z)a$. 

Since $u'$ is strictly decreasing and $\epsilon(z)\in (0,1)$, it follows from \eqref{eq:xi} that
\begin{align}
u'(a) &< u'(\epsilon(z)a)<u'(\xi) \notag \\
&=\min\set{\max\set{\E_z \hat{\beta}\hat{R}u'(c(\hat{R}(a-\xi)+\hat{Y},\hat{Z})),u'(a)},u'(0)}.\label{eq:eulerub}
\end{align}
If $u'(a)\ge \E_z \hat{\beta}\hat{R}u'(c(\hat{R}(a-\xi)+\hat{Y},\hat{Z}))$, then
\begin{equation*}
u'(a)<\min\set{u'(a),u'(0)}=u'(a),
\end{equation*}
which is a contradiction. Therefore $u'(a)<\E_z \hat{\beta}\hat{R}u'(c(\hat{R}(a-\xi)+\hat{Y},\hat{Z}))$, and \eqref{eq:eulerub} becomes
\begin{align}
u'(\epsilon(z)a)<u'(\xi)&\le \min\set{\E_z \hat{\beta}\hat{R}u'(c(\hat{R}(a-\xi)+\hat{Y},\hat{Z})),u'(0)}\notag \\
&\le \E_z \hat{\beta}\hat{R}u'(c(\hat{R}(a-\xi)+\hat{Y},\hat{Z})).\label{eq:eulerub2}
\end{align}
As in the proof of Proposition~\ref{prop:C1}, the event $\hat{R}=0$ does not affect the expectations in \eqref{eq:suff_cond1} and \eqref{eq:eulerub2}. Therefore without loss of generality we may assume $\hat{R}\ge \delta$ by Assumption~\ref{asmp:R}. Then by $\hat{Y}\ge b$, $a>A$, $\xi<\epsilon(z)a$, and \eqref{eq:Acond}, we have
\begin{align*}
\hat{R}(a-\xi)+\hat{Y}&\ge \delta(1-\epsilon(z))a+\hat{Y}\\
&> \delta\left(1-\max_{z\in \ZZ}\epsilon(z)\right)A+b\ge A.
\end{align*}
Using the fact that $u'$ is strictly decreasing and $c\in \cC_2$, it follows from \eqref{eq:eulerub2} and $\xi<\epsilon(z)a$ that
\begin{align*}
u'(\epsilon(z)a)<u'(\xi)&\le \E_z \hat{\beta}\hat{R}u'(c(\hat{R}(a-\xi)+\hat{Y},\hat{Z}))\\
&\le \E_z \hat{\beta}\hat{R}u'(\epsilon(\hat{Z})(\hat{R}(a-\xi)+\hat{Y}))\\
&\le \E_z \hat{\beta}\hat{R}u'(\epsilon(\hat{Z})\hat{R}[1-\epsilon(z)]a),
\end{align*}
which contradicts \eqref{eq:suff_cond1}. Therefore $T\cC_2\subset \cC_2$.

Now define $c_0(a,z)=a$ and $c_n=T^nc_0$. Since $c_0\in \cC_2$ and $T\cC_2\subset \cC_2$, we have $c_n\in \cC_2$ for all $n$. Therefore $c_n(a,z)\ge \epsilon(z)a$ for all $a>A$ and $z \in \ZZ$. Letting $n\to\infty$, since $c_n\to c$, it follows that $c(a,z)\ge \epsilon(z)a$ for $a>A$. Therefore the limit \eqref{eq:MPC_lim} holds by Theorem \ref{thm:AL_inf}.
\end{proof}

\begin{proof}[Proof of Proposition \ref{prop:aCRRA}]

Let $x^*\in \R_{++}^Z$ be the unique fixed point of $F$ in \eqref{eq:F}, which necessarily satisfies $x^*(z)\ge 1$ for all $z$. Then
\begin{align}
&x^*(z)=\left(1+(K(1-\gamma)x^*)(z)^{1/\gamma}\right)^\gamma \notag \\
\iff & x^*(z)^{1/\gamma}=1+\left(\E_z\hat{\beta}\hat{R}^{1-\gamma}x^*(\hat{Z})\right)^{1/\gamma} \notag \\
\iff & x^*(z)=\E_z\hat{\beta}\hat{R}x^*(\hat{Z})\left(\hat{R}(1-x^*(z)^{-1/\gamma})\right)^{-\gamma}.\label{eq:xstarEq}
\end{align}

Take any $\kappa\in (0,1)$ and define $\epsilon(z)=\kappa x^*(z)^{-1/\gamma}\in (0,1)$. Then
\begin{equation}
\eqref{eq:eulerRatio}=\E_z\hat{\beta}\hat{R}\left(\frac{x^*(\hat{Z})}{x^*(z)}\right)^{\hat{\gamma}/\gamma} \left(\hat{R}[1-\epsilon(z)]\right)^{-\hat{\gamma}}.
\label{eq:eulerRatio2}
\end{equation}
Again we may assume $\hat{R}\ge \delta$ by Assumption~\ref{asmp:R}. Noting that $\hat{\gamma}=\gamma(\hat{c})$, where
\begin{align*}
\hat{c}&=(1-\theta)\epsilon(z)a+\theta \epsilon(\hat{Z})\hat{R}[1-\epsilon(z)]a \\
&\ge (1-\theta)\epsilon(z)a+\theta \epsilon(\hat{Z})\delta [1-\epsilon(z)]a\\
&\ge \min\set{\epsilon(z)a,\epsilon(\hat{Z})\delta [1-\epsilon(z)]a}
\end{align*}
and $\set{\epsilon(z)}_{z\in \ZZ}$ are finitely many fixed numbers in $(0,1)$, it follows that $\hat{c}\to\infty$ uniformly as $a\to\infty$. Since $u$ is aCRRA, we have $\hat{\gamma}\to \gamma$ uniformly as $a\to\infty$. Since $\epsilon(z)=\kappa x^*(z)^{-1/\gamma}$ and $\kappa\in (0,1)$, we can take $A>0$ such that
\begin{align*}
&\E_z\hat{\beta}\hat{R}\left(\frac{x^*(\hat{Z})}{x^*(z)}\right)^{\hat{\gamma}/\gamma}[\hat{R}(1-\epsilon(z))]^{-\hat{\gamma}}\\
&=\E_z\hat{\beta}\hat{R}\left(\frac{x^*(\hat{Z})}{x^*(z)}\right)^{\hat{\gamma}/\gamma}[\hat{R}(1-\kappa x^*(z)^{-1/\gamma})]^{-\hat{\gamma}}\\
&\le  \E_z\hat{\beta}\hat{R}\frac{x^*(\hat{Z})}{x^*(z)}[\hat{R}(1-x^*(z)^{-1/\gamma})]^{-\gamma}=1
\end{align*}
for $a>A$, where the last equation follows from \eqref{eq:xstarEq}. Combined with \eqref{eq:eulerRatio2}, we obtain $\eqref{eq:eulerRatio}\le 1$ for $a>A$. Therefore \eqref{eq:suff_cond1} holds for $a>A$. Finally, \eqref{eq:Acond} holds if $b=\inf Y\ge 0$ is large enough.

If $\delta>1$, then we can take $\kappa\in (0,1)$ such that $\delta(1-\kappa x^*(z)^{-1/\gamma})\ge 1$ for all $z$, so \eqref{eq:Acond} holds for any $b\ge 0$.
\end{proof}

\section{Exponential grid}\label{sec:expgrid}
In many models, the state variable may become negative (\eg, asset holdings), which causes a problem for constructing an exponentially-spaced grid because we cannot take the logarithm of a negative number. Suppose we would like to construct an $N$-point exponential grid on a given interval $(a,b)$. A natural idea to deal with such a case is as follows.

\begin{framed}
\begin{oneshot}[Constructing exponential grid]\quad 
\begin{enumerate}
\item Choose a shift parameter $s>-a$.
\item Construct an $N$-point evenly-spaced grid on $(\log(a+s),\log(b+s))$.
\item Take the exponential.
\item Subtract $s$.
\end{enumerate}
\end{oneshot}
\end{framed}

The remaining question is how to choose the shift parameter $s$. Suppose we would like to specify the median grid point as $c\in (a,b)$. Since the median of the evenly-spaced grid on $(\log(a+s),\log(b+s))$ is $\frac{1}{2}(\log(a+s)+\log(b+s))$, we need to take $s>-a$ such that
\begin{align*}
&c=\exp\left(\frac{1}{2}(\log(a+s)+\log(b+s))\right)-s\\
&\iff c+s=\sqrt{(a+s)(b+s)}\\
&\iff (c+s)^2=(a+s)(b+s)\\
&\iff c^2+2cs+s^2=ab+(a+b)s+s^2\\
&\iff s=\frac{c^2-ab}{a+b-2c}.
\end{align*}
Note that in this case
$$s+a=\frac{c^2-ab}{a+b-2c}+a=\frac{(c-a)^2}{a+b-2c},$$
so $s+a$ is positive if and only if $c<\frac{a+b}{2}$. Therefore, for any $c\in \left(a,\frac{a+b}{2}\right)$, it is possible to construct an exponentially-spaced grid with end points $(a,b)$ and median point $c$.

\end{document}